\title{Selfish Creation of Social Networks\\{\small (full version)}}
\author{Davide Bil{\`{o}}\thanks{Department of Humanities and Social Sciences, University of Sassari, Italy} \and Tobias Friedrich\thanks{Hasso Plattner Institute, University of Potsdam, Germany, \texttt{firstname.lastname@hpi.de}} \and Pascal Lenzner\footnotemark[2] \and Stefanie Lowski\thanks{Department of Computer Science, Humboldt-University Berlin, Germany, \texttt{lowski@math.hu-berlin.de}}\and Anna Melnichenko\footnotemark[2] }
\date{~}
\renewcommand{\arraystretch}{1.3} 
\pgfplotsset{compat=1.10}
\newcommand{\PS}{pairwise stable\xspace}
\newcommand{\dist}{{\mathit d}} 
\newtheorem{theorem}{Theorem}
\newtheorem{lemma}{Lemma}
\newtheorem{proposition}{Proposition}
\begin{document}
\maketitle

\begin{abstract}
\noindent Understanding real-world networks has been a core research endeavor throughout the last two decades. Network Creation Games are a promising approach for this from a game-theoretic perspective. In these games, selfish agents corresponding to nodes in a network strategically decide which links to form to optimize their centrality. Many versions have been introduced and analyzed, but none of them fits to modeling the evolution of social networks. In real-world social networks, connections are often established by recommendations from common acquaintances or by a chain of such recommendations. Thus establishing and maintaining a contact with a friend of a friend is easier than connecting to complete strangers. This explains the high clustering, i.e., the abundance of triangles, in real-world social networks. 

We propose and analyze a network creation model inspired by real-world social networks. Edges are formed in our model via bilateral consent of both endpoints and the cost for establishing and maintaining an edge is proportional to the distance of the endpoints before establishing the connection. We provide results for generic cost functions, which essentially only must be convex functions in the distance of the endpoints without the respective edge. For this broad class of cost functions, we provide many structural properties of equilibrium networks and prove (almost) tight bounds on the diameter, the Price of Anarchy and the Price of Stability. Moreover, as a proof-of-concept we show via experiments that the created equilibrium networks of our model indeed closely mimics real-world social networks. We observe degree distributions that seem to follow a power-law, high clustering, and low diameters. This can be seen as a promising first step towards game-theoretic network creation models that predict networks featuring all core real-world properties.
\end{abstract}

\section{Introduction}
Complex networks from the Internet to various (online) social networks have a huge impact on our lives and it is thus an important research challenge to understand these networks and the forces that shape them. The emergence of the Internet has kindled the interdisciplinary field of Network Science~\cite{Bar16}, which is devoted to analyzing and understanding real-world networks. 

Extensive research, e.g. \cite{RHB99,B99,BKM00,K00,LKF05,NBW11,Bar16}, on real world networks from many different domains like communication networks, social networks, metabolic networks, etc.  
has revealed the astonishing fact that most of these real-world networks share the following basic properties: 
\begin{itemize}
 \item \emph{Small-world property:} The diameter and average distances are at most logarithmic in number of nodes.
 \item \emph{Clustering:} Two nodes with a common neighbor have a high probability of being neighbors, i.e., there is an abundance of triangles and small cliques. 
 \item \emph{Power-law degree distribution:} The probability that a node has degree $k$ is proportional to $k^{-\beta}$, for $2 \leq \beta \leq 3$. That is, the degree distribution follows a power-law. Such networks are also called \emph{scale-free networks}.
\end{itemize}
The phenomenon that real world networks from different domains are very similar begs a scientific explanation, i.e., formal models that generate networks with the above properties from very simple rules.

Many such models have been proposed, most prominently the preferential attachment model~\cite{B99}, Chung-Lu random graphs~\cite{CL02}, hyperbolic random graphs~\cite{Kri10,FK15} and geometric inhomogeneous random graphs~\cite{BKL15}. However, all these models describe a purely random process which eventually outputs a network having realistic properties. In contrast, many real-world networks evolved over time by an interaction of rational agents. For example, in (online) social networks~\cite{Jac10} the selfish agents correspond to people or firms that choose carefully with whom to maintain a connection. Thus, a model with higher explanatory power should consider rational selfish agents which use and modify the network to their advantage~\cite{Pap01}.

In game-theoretic models for network formation, selfish agents correspond to nodes in a network. Each agent strategically selects other agents to form a link. The union of all chosen links then determines the edge-set of the created network~\cite{JW96}. The individual goal of each agent is modeled via a cost function, which typically consists of costs for creating links and of a service cost term, which measures the perceived quality of the created network for the individual agent. For example, the service cost could be proportional to the node's centrality~\cite{Fab03} or simply to the number of reachable nodes~\cite{BG00}. Any network can be considered as an outcome of such a game and among all possible networks the so-called equilibrium networks, where no agent wants to add or remove links, are particularly interesting since analyzing their structure yields insights into why real-world networks exhibit the above-mentioned properties. Moreover, a game-theoretic model allows measuring the impact of the selfish agent behavior on the whole society of agents. 

So far, game-theoretic approaches can explain the small-world property, that is, it has been proven that the diameter of all equilibrium networks is small~\cite{De07}. However, to the best of our knowledge, no known game-theoretic model can also explain the emergence of clustering and a power-law degree distribution. 

\paragraph{Our Contribution} In this paper, we propose and analyze a simple and very general game-theoretic model which is motivated by real-world social networks. Its main actors are selfish agents that bilaterally form costly links to increase their centrality. Hereby, the cost of each link is an arbitrary convex function in the distance of the involved nodes without this link. This naturally models the convention that connecting with a friend of a friend is much easier than to establish and maintain a link with an agent having no common acquaintances. To establish a link, both endpoints have to agree and pay the edge's cost.  

We characterize the social optimum and prove the existence of equilibrium networks for our generic model. For this, we provide many structural properties. Moreover, we give (nearly) tight bounds on the diameter, the Price of Anarchy and the Price of Stability that essentially only depend on the cost of closing a triangle and on the cost of maintaining a bridge-edge in the network. This implies that all these values are very low for many natural edge-cost functions. Moreover, as a proof of concept, we show via simulation experiments of our model that a given sparse initial network evolves over time into an equilibrium network having a power-law degree distribution, high clustering and a low diameter. Hence, our model promises to be the first game-theoretic network formation model which predicts networks that exhibit all core properties of real-world networks.  

\paragraph{Model and Notation}
We consider a model which is related to the bilateral network creation game~\cite{CP05}. In our model, called \emph{social network creation game (SNCG)}, the set of $n$ selfish agents $V$ corresponds to the nodes of a network and the agents' strategies determine the edge-set of the formed network $G$. More precisely, let $\mathbf{s} = (S_1,\dots,S_n)$ denote the strategy profile, where $S_u \subseteq V\setminus\{u\}$ corresponds to agent $u$'s strategy, then the jointly created network $G(\mathbf{s})$ is defined as $G(\mathbf{s}) = (V,E)$, with $E = \{\{u,v\}\mid u,v \in V, u\in S_v, v\in S_u\}$. And, inversely, for any given undirected network $G = (V,E)$ there exists a minimal strategy vector $\mathbf{s} = (S_1,\dots,S_{|V|})$ with $u \in S_v$ and $v\in S_u$ if and only if $\{u,v\} \in E$, that realizes this network, i.e., with $G = G(\mathbf{s})$. Hence, we will omit the reference to $\mathbf{s}$. Also, we will use the shorthand $uv$ for the undirected edge $\{u,v\}$.

Each agent $u$ tries to optimize a cost function $cost(G,u)$, which solely depends on the structure of the network $G$. In real-world social networks new connections are formed by a bilateral agreement of both endpoints while an existing connection can be unilaterally removed by any one of the involved endpoints. Following this idea, we consider only single edge additions with consent of both endpoints or single edge deletions as possible (joint) strategy changes of the agents. As equilibrium concept we adopt the well-known solution concept called \emph{pairwise stability}~\cite{JW96}. Intuitively, a network $G$ is \PS if every edge of $G$ is beneficial for both endpoints of the edge and for every non-edge of $G$, at least one endpoint of that edge would increase her cost by creating the edge. More formally, $G = (V,E)$ is \PS if and only if the following conditions hold:
\begin{enumerate}
\item for every edge $uv \in E$, we have $cost(G-uv,u) \geq cost(G,u)$ and $cost(G-uv,v) \geq cost(G,v)$;
\item for every non-edge $uv \notin E$, we have $cost(G+uv,u) \geq cost(G,u)$ or $cost(G+uv,v) \geq cost(G,v)$;
\end{enumerate}
where $G-uv$ (resp., $G+uv$) denotes the network $G$ in which the edge $uv$ has been deleted (resp., added).
Created edges are bidirectional and can be used by all agents, but the cost of  each edge is equally shared by its two endpoints.

We denote by $d_G(u,v)$ the distance between $u$ and $v$ in $G = (V,E)$, i.e., the number of edges in a shortest path between $u$ and $v$ in $G$. We assume that $d_G(u,v)=+\infty$ if no path between $u$ and $v$ exists in $G$.
The main novel feature of our model is the definition of the cost of any edge $uv \in E$, which is proportional to the distance of both endpoints without the respective edge, i.e., proportional to $d_{G-uv}(u,v)$. This is motivated by the fact that, in social networks, the probability of establishing a new connection between two parties is inversely proportional to their degree of separation. More precisely, let $\sigma:\mathbb{R} \rightarrow \mathbb{R}$ be a monotonically increasing convex function such that $\sigma(0)=0$.\footnote{All the results of this paper hold if we replace this constraint by the milder constraint $\sigma(3)\geq \frac{3}{2}\sigma(2)$.}
The cost of the edge $uv$ in network $G$ is equal to 
\begin{equation*}
c_G(uv)=	\begin{cases}
				\sigma\left(d_{G-uv}(u,v)\right) &\text{if $d_{G-uv}(u,v) \neq +\infty$,}\\
				\sigma(n) &\text{otherwise}.
			\end{cases}
\end{equation*} 
We call an edge $uv$ a \textit{$k$-edge} if  $d_{G-uv}(u,v)=k$, and a \textit{bridge} if $d_{G-uv}(u,v)=+\infty$. 
If the network is clear from the context, we will sometimes omit the reference to $G$ and we still simply write $c(uv)$ to denote the cost of edge $uv$. Note that by definition, any bridge in $G$, i.e., any edge whose removal would increase the number of connected components of $G$, has cost $\sigma(n)>\sigma(n-1)$ and thus any bridge has higher cost than any other non-bridge edge. 
The latter property can be understood as an incentive towards more robust networks. Note, that the addition or removal of an edge in network $G$ can also influence the cost of other edges in $G$. See Figure~\ref{fig:edge_cost_example} for an example.
\begin{figure}[h]
\centering
\includegraphics[width=9cm]{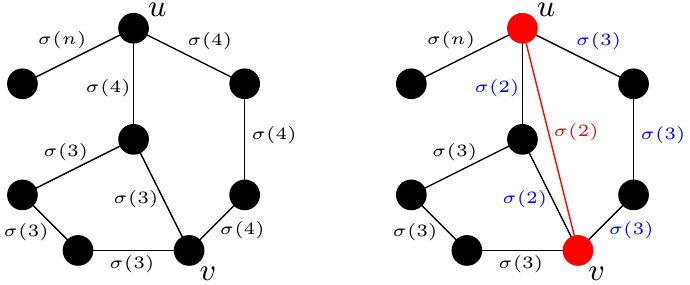}
\caption{Edge costs before and after the edge $uv$ is added.}
\label{fig:edge_cost_example}
\end{figure}

\noindent The resulting cost for an agent $u$ in the network $G$ is the sum of the cost of all edges incident to $u$ and the sum of distances to all other agents:
$$
cost(G, u):= \frac{1}{2}\sum_{v \in N_G(u)} c_{G}(uv) + \sum_{v\in V} d_{G}(u,v),
$$
where $N_G(u)$ is the set of all neighbors of $u$ in $G$. The quality of the created network is measured by its \textit{social cost}, which is denoted by $cost(G):=\sum_{u\in V}cost(G, u)$ and measures the total cost of all agents.

As in the original bilateral network creation game~\cite{CP05}, we restrict our study to connected networks, as any \PS non-connected network has an unbounded social cost\footnote{E.g., any network with no edges and $n\geq 3$ is \PS.}.  Let $worst_n$ (resp., $best_n$) be the highest (resp., lowest) possible social cost of a \PS (connected) network created by $n$ agents, assuming that a \PS network with $n$ agents always exists. 
Moreover, let $opt_n$ be the social cost of a {\em social optimum}, i.e., a minimum social cost network of $n$ nodes. 
Then the \textit{Price of Anarchy (PoA)}~\cite{KP99} is defined as $\max_{n\in\mathbb{N}}\frac{worst_n}{opt_n}$ 
and measures the deterioration of the network's social cost due to the agents' selfishness. 
The \textit{Price of Stability (PoS)}~\cite{ADKTWR} is the ratio $\max_{n\in\mathbb{N}}\frac{best_n}{opt_n}$ and describes the minimal cost discrepancy between an equilibrium and an optimal outcome.

\paragraph{Related Work} 
Strategic network formation is a rich and diverse research area and it is impossible to discuss all previous work in detail. Instead, we focus on the models which are closest to our approach. 

The SNCG is a variant of the bilateral network creation game~(BNCG)~\cite{CP05}. The BNCG is based on the unilateral network creation game~(NCG)~\cite{Fab03} where edges can be created and must be paid by only one of its endpoints, and the pure Nash equilibrium is used as solution concept. In both models edges have a uniform cost of $\alpha > 0$. In a series of papers it was established that the PoS for the NCG is at most $\frac{4}{3}$ and the PoA is constant for almost all values of $\alpha$~\cite{Fab03,Al06,De07,MS10,MMM13,BL20,AM19}. Moreover, it was shown that the diameter of any equilibrium network in the NCG is at most $2^{\mathcal{O}(\sqrt{\log n})}$~\cite{De07} and for many ranges of $\alpha$ it is constant. In contrast, for the BNCG it was shown the PoA of the BNCG is in $\Theta(\min\{\sqrt{\alpha},n/\sqrt{\alpha}\})$ and that a equilibrium networks with a diameter in $\Theta(\sqrt{\alpha})$ exist~\cite{CP05,De07}. The original NCG was dedicated to model real-world networks like peer-to-peer networks and social networks. However, the main downside of these classical models is that they do not predict a realistic degree distribution or high clustering. 
A NCG variant was proposed where agents try to maximize their local clustering instead of their centrality~\cite{BK11}. This model yields various sparse equilibrium networks with high clustering but these can have a large diameter and a homogeneous degree distribution.

Closer to the SNCG are variants of the NCG with non-uniform edge cost. Models were proposed where the edge cost is proportional to its quality~\cite{CMadH14}. Edges between certain types of agents have different costs~\cite{MMO14,MMO15}, and the edge cost depends on the node degree~\cite{CLMM17}, or the edge costs are defined by an underlying geometry~\cite{BFLM19}. Especially the latter is related to our model, as our model can also be understood as having a dynamically changing underlying geometry which depends on the structure of the current network. Finally, the island connection model~\cite{JR05} assumes that groups of agents are based on islands and that the edge cost within an island is much lower than across islands. This yields equilibria with low diameter and high clustering but no realistic degree distribution.

The SNCG incorporates a robustness aspect since bridge-edges are expensive. This fits to a recent trend in the AI community for studying robust network formation~\cite{MMO15,CLMM16,GJKKM16,CJKKM19,EFLM20}.

Despite the variety of studied network formation models, to the best of our knowledge, no simple game-theoretic model exists, which predicts a low diameter, a power-law degree distribution and high clustering in its equilibrium networks. We are also not aware of any simulation results in this direction. However, there are two promising but very specialized candidates in that direction. The first candidate, which is particularly tailored to the web graph~\cite{KMPRS15}, yields directed equilibrium networks that share many features of real-world content networks. The second candidate uses a game-theoretic framework and hyperbolic geometry to generate networks with real-world features. In the network navigation game~\cite{Gul15}, agents correspond to randomly sampled points in the hyperbolic plane and they strategically create edges to ensure greedy routing in the created network. It is shown that the equilibrium networks indeed have a power-law degree distribution and high clustering. However, the main reason for this is not the strategic behavior of the agents but the fact that the agents correspond to uniformly sampled points in the hyperbolic plane. It is known that the closely related hyperbolic random graphs~\cite{Kri10} indeed have all core properties of real-world networks. 

\section{Properties of Equilibrium Networks}
In this section we prove structural properties satisfied by all connected \PS networks that will be useful in proving our main results. We first define some basic notation and provide a nice property satisfied by the function $\sigma$. An edge $e$ of a network $G$ is a {\em bridge} if $G-e$ has at least one more connected component than $G$. A connected network that has no bridge is said to be {\em 2-edge-connected}. A {\em 2-edge-connected component} of a network $G$ is a maximal (w.r.t. node addition) induced subgraph of $G$ that is 2-edge-connected.\footnote{The subgraph of $G$ {\em induced} by a node set $U\subseteq V$ is a subgraph whose node set is $U$ that, for any two nodes $u,v \in U$, contains the edge $uv$ if $uv$ is also an edge of $G$.} The {\em diameter} $D$ of a network $G$ is equal to the length of the longest shortest path in $G$, i.e., $D=\max_{u,v \in V}d_G(u,v)$. Finally, we say that an edge $uv$ of $G$ is an {\em $i$-edge} if $d_{G-uv}(u,v)=i$, where we use the convention {\em $n$-edge} for a bridge edge.
\begin{proposition}\label{prop:basic_property_sigma}
Fix a positive real value $x$. Let $x_1, \dots , x_k$, with $0 \leq x_i \leq x$, be $k\geq 2$ positive real values and let $\lambda_1,\dots,\lambda_k$, with $\lambda \in [0,1]$, such that $x=\sum_{i=1}^k (\lambda_ix_i)$. Then $\sigma(x) \geq \sum_{i=1}^k\big(\lambda_i \sigma(x_i)\big)$.
\end{proposition}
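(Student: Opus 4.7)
The plan is to exploit the standard consequence of convexity that, when $\sigma(0)=0$, the secant slope from the origin, namely $\sigma(t)/t$, is non-decreasing in $t>0$. This single observation reduces the claim to a one-line calculation.

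First I would establish the auxiliary inequality: for every $0 < y \le x$,
$$\sigma(y) \;\le\; \frac{y}{x}\,\sigma(x).$$
This follows by writing $y = \tfrac{y}{x}\cdot x + \bigl(1-\tfrac{y}{x}\bigr)\cdot 0$, which is a genuine convex combination since $y/x\in(0,1]$, and applying convexity of $\sigma$ together with $\sigma(0)=0$ to the right-hand side.

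Next I would multiply the auxiliary inequality by $\lambda_i \ge 0$ and sum over $i$. Indices with $x_i=0$ contribute $0$ to both sides (since $\sigma(0)=0$ and $\lambda_i x_i = 0$), and for the remaining indices I obtain
$$\sum_{i=1}^k \lambda_i\,\sigma(x_i) \;\le\; \sum_{i=1}^k \lambda_i\,\frac{x_i}{x}\,\sigma(x) \;=\; \frac{\sigma(x)}{x}\sum_{i=1}^k \lambda_i x_i \;=\; \frac{\sigma(x)}{x}\cdot x \;=\; \sigma(x),$$
where the final equality invokes the hypothesis $\sum_{i=1}^k \lambda_i x_i = x$.

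The step that might look like a pitfall is that, since the coefficients $\lambda_i$ are not assumed to sum to $1$, a direct appeal to Jensen's inequality would point in the wrong direction. The key conceptual move is therefore to recognize that the right way to use convexity here is through the monotonicity of the origin-secant $\sigma(t)/t$, not through Jensen's inequality. Once that observation is in place, there is no genuine obstacle and the proof consists of the two displayed lines above.
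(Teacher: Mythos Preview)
Your proof is correct and follows essentially the same route as the paper: both establish the pointwise bound $\sigma(x_i)\le \frac{x_i}{x}\sigma(x)$ by writing $x_i$ as a convex combination of $0$ and $x$ and using $\sigma(0)=0$, then multiply by $\lambda_i$ and sum using $\sum_i \lambda_i x_i = x$. Your remark that Jensen's inequality does not apply directly (since the $\lambda_i$ need not sum to $1$) and that the right viewpoint is monotonicity of the origin-secant $\sigma(t)/t$ is a helpful piece of framing, but the underlying argument is the same.
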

\begin{proof}
We show that $\sigma(x_i) \leq \frac{x_i\sigma(x)}{x}$. This is enough to prove the claim since 
\[
\sum_{i=1}^k \big(\lambda_i \sigma(x_i)\big) \leq \frac{\sigma(x)}{x}\sum_{i=1}^k (\lambda_i x_i) \leq \sigma(x).
\] 
Let $x_i=\bar \lambda_i x$, i.e., $\bar \lambda_i=\frac{x_i}{x}$, and observe that $\bar \lambda_i \in [0,1]$. By convexity of $\sigma$ we have that
\[
\sigma(x_i)=\sigma\big((1-\bar \lambda_i) 0+\bar \lambda_i x\big) \leq (1-\bar \lambda_i)0+\bar \lambda_i \sigma(x)=\frac{x_i}{x} \sigma(x).
\]
\end{proof} 

\noindent In the next statement we claim that nodes can be incident to at most one expensive edge. Hence, the number of such edges is limited. 
\begin{proposition}\label{prop:edge_price_in_PSN}
In any \PS network, any node has at most one incident edge of cost at least $\sigma(4)$.
If $2\sigma(2)\leq \sigma(3)$ holds, any node in a \PS network has at most one incident edge of cost at least $\sigma(3)$.
\end{proposition}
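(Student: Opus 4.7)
The plan is to argue by contradiction via a ``shortcut'' deviation: if some node $u$ had two incident edges of cost at least $\sigma(4)$, then the two other endpoints would both strictly benefit from forming the missing triangle-closing edge, which would violate pairwise stability.

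Suppose $u$ has two incident edges $uv_1,uv_2$ with $c_G(uv_i)\geq \sigma(4)$ for $i=1,2$. By monotonicity of $\sigma$, this forces $d_{G-uv_i}(u,v_i)\geq 4$ for both $i$. First I would observe that $v_1v_2 \notin E$: otherwise the length-$2$ path $v_1 - v_2 - u$ would lie entirely in $G-uv_1$ and yield $d_{G-uv_1}(u,v_1)\leq 2$, a contradiction. Together with the length-$2$ path $v_1 - u - v_2$, this pins down $d_G(v_1,v_2)=2$.

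Next I would estimate how $v_1$'s cost changes if the non-edge $v_1v_2$ is added. The new edge itself has cost $\sigma(d_G(v_1,v_2))=\sigma(2)$, so $v_1$ contributes $\tfrac12\sigma(2)$. Three kinds of savings compensate, all of the same sign: (i) the incident edge $uv_1$ becomes cheap, because in $G+v_1v_2-uv_1$ the path $u - v_2 - v_1$ has length $2$, so the cost of $uv_1$ drops from at least $\sigma(4)$ to at most $\sigma(2)$, saving $v_1$ at least $\tfrac12\bigl(\sigma(4)-\sigma(2)\bigr)$; (ii) the distance $d(v_1,v_2)$ decreases from $2$ to $1$, saving one further unit; (iii) every other incident-edge cost and every other pairwise distance from $v_1$ can only weakly decrease when an edge is added to $G$, contributing non-negative savings. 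Summing, the net change in $v_1$'s cost is at most
\[
\tfrac12\sigma(2) - \tfrac12\bigl(\sigma(4)-\sigma(2)\bigr) - 1 \;=\; \sigma(2) - \tfrac12\sigma(4) - 1.
\]
Convexity of $\sigma$ with $\sigma(0)=0$ yields $\sigma(2)\leq \tfrac12\sigma(4)$, so this quantity is $\leq -1<0$. Hence $v_1$ strictly prefers adding $v_1v_2$, and the symmetric argument gives the same for $v_2$, contradicting pairwise stability.

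For the stronger second statement, I would re-run the same argument verbatim with $\sigma(4)$ replaced by $\sigma(3)$ throughout. The only step where plain convexity was used is the final bound $\sigma(2)\leq \tfrac12\sigma(4)$, and this is precisely where the extra hypothesis $2\sigma(2)\leq \sigma(3)$ takes over to give $\sigma(2)-\tfrac12\sigma(3)\leq 0$. The main obstacle I anticipate is the careful bookkeeping verifying that the ``other edges'' and ``other distances'' contributions really are non-negative savings; the crucial structural input --- that $v_1v_2\notin E$ and $d_G(v_1,v_2)=2$ --- is what makes the shortcut deviation cheap enough to dominate the large savings realized on the expensive edge $uv_1$.
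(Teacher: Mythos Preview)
Your proposal is correct and follows essentially the same argument as the paper: assume two expensive edges $uv_1,uv_2$ meet at $u$, observe $v_1v_2\notin E$ and $d_G(v_1,v_2)=2$, then show that adding the triangle-closing edge $v_1v_2$ is strictly beneficial for both endpoints because the new $\sigma(2)$-share is offset by the drop of $uv_i$ to a $2$-edge, with an extra unit saved in distance cost. The paper phrases the threshold as $2\sigma(2)$ (then invokes $\sigma(4)\geq 2\sigma(2)$), whereas you carry $\sigma(4)$ through the inequality directly and appeal to convexity at the end; the bookkeeping and the deviation are the same.
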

\begin{proof}
Let $uv$ and $vw$ be two distinct edges of $G$ that are incident to $v$. We prove the claim by showing that at most one of these edges can have a cost of at least $2\sigma(2)$. This implies the claim since, by Proposition~\ref{prop:basic_property_sigma}, $\sigma(4) \geq 2\sigma(2)$.
 
If both edges $uv$ and $vw$ have a cost of at least $2\sigma(2)$ each, then $G$ is not \PS as, by adding the edge $uw$, the total edge cost of both agent $u$ and agent $w$ does not increase, while the total distance cost of each of the two agents decreases by at least 1. In fact, the edge cost of each edge $uv$, $uw$, and $vw$ in $G+uw$ is equal to $\sigma(2)$.
\end{proof}

\noindent Next, we establish that all \PS networks contain at most three bridges.

\begin{proposition}\label{prop:2_bridges_PSN}
Any \PS network contains at most three bridges.
\end{proposition}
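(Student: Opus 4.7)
The plan is to argue by contradiction. The case $n\le 3$ is immediate since $G$ has at most $\binom{3}{2}=3$ edges. For $n\ge 4$ every bridge has cost $\sigma(n)\ge\sigma(4)$, so by Proposition~\ref{prop:edge_price_in_PSN} the bridges of $G$ form a matching; in particular the \emph{bridge tree} $T$, obtained by contracting each 2-edge-connected component of $G$ and keeping bridges as its edges, is well defined. Suppose for contradiction that $G$ is \PS with $b\ge 4$ bridges, so that $T$ has at least five nodes.

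The technical core is a short-circuiting move. If $e_1=u_1v_1$ and $e_2=u_2v_2$ are two bridges incident to a common 2-edge-connected component $B$ (with $v_1,v_2\in B$ and $u_1,u_2\notin B$), then $B$ has $T$-degree at least $2$ and hence at least three nodes, giving $v_1\ne v_2$ and $d_B(v_1,v_2)\ge 1$. Adding the edge $u_1u_2$ places both $e_1$ and $e_2$ on the new cycle through $B$: in $G+u_1u_2$ neither is a bridge anymore, and the new edge together with both former bridges all have cost $\sigma(d)$, where $d:=d_G(u_1,u_2)=2+d_B(v_1,v_2)\ge 3$. Agent $u_1$'s edge cost therefore changes by $\tfrac12\sigma(d)-\tfrac12(\sigma(n)-\sigma(d))=\sigma(d)-\tfrac12\sigma(n)$, while its distance to $u_2$ drops by $d-1$. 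Proposition~\ref{prop:basic_property_sigma} with $x=2d$, $x_1=x_2=d$, $\lambda_1=\lambda_2=1$, combined with monotonicity of $\sigma$, gives $\sigma(n)\ge\sigma(2d)\ge 2\sigma(d)$ whenever $d\le n/2$; the total change in $u_1$'s cost is then at most $-(d-1)<0$, and by symmetry so is that of $u_2$, contradicting pairwise stability. Thus in the assumed PS network every pair of bridges sharing a block $B$ satisfies $d_B(v_1,v_2)>n/2-2$.

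To finish I case-split on the shape of $T$. Every tree on $\ge 5$ nodes either contains a path with four edges or has a node of degree $\ge 3$. In the path case, each of the three internal blocks $C$ along such a path has two bridge endpoints at distance $>n/2-2$ in $C$, so $|C|\ge d_C+1>n/2-1$; summing over the three internal blocks gives more than $3n/2-3>n$ vertices, a contradiction. In the other case, $T$ has a block $B$ of degree $k\ge 3$ containing $k$ distinct bridge endpoints pairwise at distance $>n/2-2$ in $B$, and the closed balls of radius $\lfloor(n/2-2)/2\rfloor$ around these endpoints are pairwise disjoint subsets of $B$ of size $\gtrsim n/4$ (take the prefix of a shortest path toward another far endpoint), giving $|B|\gtrsim kn/4$. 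For $k\ge 4$ this already contradicts $|B|\le n-k$. For $k=3$ (so $T$ is $Y$-shaped with an adjacent degree-$2$ block $B'$), the extra pair at $B'$ yields $|B'|>n/2-1$; then $|B|+|B'|\gtrsim 5n/4$, which together with the $\ge 3$ leaves exceeds $n$. The delicate step is making the ball-packing estimates sharp enough to close the gap in the $k=3$ subcase, which crucially uses the matching property (the bridge endpoints in $B$ are distinct) together with the 2-edge-connectedness of $B$ and $B'$.
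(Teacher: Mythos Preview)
Your approach is sound and genuinely different from the paper's. Both proofs hinge on the same short-circuiting move---adding an edge between the far endpoints of two nearby bridges and using Proposition~\ref{prop:basic_property_sigma} to show both endpoints gain---but you locate the two bridges differently. The paper works in the block--cut tree and invokes Caccetta's diameter bounds for 2-connected and 2-edge-connected graphs to show that among any four bridges two lie within distance $n/2-2$. You instead exploit Proposition~\ref{prop:edge_price_in_PSN} to observe that bridges form a matching, so every non-leaf node of the \emph{bridge tree} is a genuine 2-edge-connected component with pairwise distinct bridge endpoints, and then bound the sizes of these components directly via a packing argument. This is more elementary (no external diameter results needed), at the cost of a case split on the shape of the bridge tree.

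That said, what you have written is a sketch, and your own language (``$\gtrsim$'', ``the delicate step'') signals that you know it. The gaps are all fillable: for a degree-$k$ block $B$ your disjoint-ball count gives $|B|\ge k(r+1)$ with $r=\bigl\lfloor(\lfloor n/2\rfloor-2)/2\bigr\rfloor$, and one checks $4(r+1)>n-4$ in every residue class mod~$4$; for $k=3$, since $T$ has at least five nodes, $B$ has a non-leaf neighbour $B'$ (else $|T|=4$), and then $|B|+|B'|\ge 3(r+1)+\lfloor n/2\rfloor$ together with the three remaining singleton blocks already exceeds $n$. Two small points you should make explicit: (i) adding $u_1u_2$ cannot increase the cost of any other edge incident to $u_1$ or $u_2$ (distances in $G+e-f$ are no larger than in $G-f$), so your edge-cost change really is an upper bound; and (ii) the path from $v_1$ to $v_2$ realising $d_G(v_1,v_2)$ stays inside $B$ because leaving $B$ would require traversing a bridge twice, so $d_G(u_1,u_2)=2+d_B(v_1,v_2)$ as you use.
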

\begin{proof}
We prove that any network $G$ with four or more bridges cannot be \PS. First, we show that there are two bridges at a distance of at most $n/2-2$ in $G$, i.e., two vertices $u_1$ and $u_2$ that are incident to 2 distinct bridges $e_1=u_1v_1$ and $e_2=u_2v_2$ such that $d_G(u_1,u_2) \leq \frac{n}{2}-2$. We observe that this is enough to prove the claim. Indeed, w.l.o.g., let $d_G(v_1,v_2)=2+d_G(u_1,u_2)$. We have that $d_G(v_1,v_2) \leq \frac{n}{2}$. By adding the edge $v_1v_2$ to $G$ the total distance cost of both agents $v_1$ and $v_2$ decreases by at least 1, while their total edge cost differs by $\sigma(n/2)-\frac{1}{2}\sigma(n) \leq 0$ as, by Proposition \ref{prop:basic_property_sigma}, $\sigma(n) \geq 2\sigma(n/2)$. Hence, $G$ cannot be \PS.

We now complete the proof by showing that there are two bridges at a distance of at most $\frac{n}{2}-2$.

Let $\mathcal{T}$ be a {\em block-cut tree} decomposition of $G$, i.e., a decomposition of $G$ into maximal {\em 2-connected components} and {\em cut nodes}.\footnote{A node $x$ of a connected graph $G$ is a {\em cut node} if its removal from $G$ results in a graph that is not connected. A {\em 2-connected graph} is a connected graph with no cut node. A {\em 2-connected component} of $G$ is a maximal (w.r.t. node insertion) 2-connected subgraph of $G$. A {\em block-cut tree} $\mathcal{T}$ of $G$ is a tree where each tree node represents either a cut node or a 2-connected component of $G$. More precisely, there is an edge between the representative of a cut node $x$ of $G$ and the representative of a 2-connected component $G'$ of $G$ iff $x$ is a node of $G'$.} Notice that each bridge $uv$ is represented in $\mathcal{T}$ as a 2-connected component that is connected with the 2 cut nodes $u$ and $v$. Let $\mathcal{T}'$ be a minimal connected subtree of $\mathcal{T}$ that contains exactly four 2-connected components that are bridges, and let $G'$ be the subgraph of $G$ whose block-cut tree decomposition is represented by $\mathcal{T}'$. Let $e_i=u_i v_i$, with $i=1,2,3$, be the 4 bridges of $G'$. We denote by $\dist(e_i,e_j)=\min\big\{d_{G'}(u_i,u_j),d_{G'}(u_i,v_j),d_{G'}(v_i,u_j),d_{G'}(v_i,v_j)\big\}$ the distance in $G'$ between the two bridges $e_i$ and $e_j$. We shall prove that $\min_{1\leq i < j \leq 4}\dist(e_i,e_j)$.

Let $P_{i,j}$ denote the (unique) simple path in $\mathcal{T}'$ from the node that corresponds to the 2-connected component corresponding to the node that corresponds to the 2-connected component $e_j$. The proof divides into 2 complementary cases, depending on the structures of the paths $P_{i,j}$, with $1 \leq i < j \leq 4$. 

The first case is when at least two paths in $\{P_{i,j} \mid 1 \leq i < j \leq 4\}$ are node disjoint. W.l.o.g., we assume that $P_{1,2}$ is node disjoint w.r.t. $P_{3,4}$. W.l.o.g., we assume that the overall number of nodes of the 2-connected components corresponding to the nodes in $P_{1,2}$ is at most $n/2$. It is well-known that a 2-edge-connected graph with $n$ nodes has diameter of at most $\frac{2}{3}n$~\cite{Caccetta92}. In this case, as $n > 4$ ($G$ has at least four bridges), we have that
\[
\dist(e_1,e_2) \leq \frac{2}{3}\left(\frac{n}{2}-2\right) \leq \frac{n}{3}-\frac{4}{3} \leq \frac{n}{2}-2.
\]

The second case is when there are no two node disjoint paths in $\{P_{i,j} \mid 1 \leq i < j \leq 4\}$. This can happen only if there is exactly one 2-connected component, say $C$ that is traversed by all the 4 paths $P_{i,j}$'s. Let $n_C$ denote the number of nodes of $C$. Let $n_i$, with $i=1,2,3,4$, denote the overall number of nodes of the 2-connected components corresponding to the nodes in the (unique) simple path in $\mathcal{T}'$ from the node that represents $e_i$ to the node that represents the 2-connected components right before $C$. Clearly, $n_1,n_2,n_3,n_4 \geq 2$ and $n_C\leq n-4$. W.l.o.g., we assume that $n_1 \leq n_2 \leq n_3 \leq n_4$. We prove that $\dist(e_1,e_2) \leq \frac{n}{2}-2$. 
It is well known that the diameter of any 2-connected graph with $n$ nodes is at most $\left \lceil \frac{n-1}{2}\right \rceil$~\cite{Caccetta92}. We divide the proof into 3 subcases.
The first subcase is when $n_1,n_2=2$. This implies that 
\[
\dist(e_1,e_2) \leq \left\lceil\frac{n_C-1}{2}\right\rceil \leq \left \lceil \frac{n-4-1}{2}\right\rceil \leq \frac{n}{2}-2.
\]
The second subcase is when $n_1=2$ and $n_2 > 2$. In this case, we have that $n_2,n_3,n_4 \geq 4$. Moreover, $n_C+n_1+n_2+n_3+n_4 \leq n+4$ from which we derive $n_C\leq n+2-3n_2$. The diameter of $C$ is at most $\lceil (n_C-1)/2 \rceil$ as $C$ is 2-connected. Moreover, all the other 2-connected components traversed by $P_{1,2}$, except for $e_1$ and $e_2$ and $C$, form a 2-edge-connected graphs of diameter at most $\frac{2}{3}(n_2-1)$. Therefore, we have that
\begin{align*}
\dist(e_1,e_2)	& \leq \left\lceil \frac{n_C-1}{2}\right\rceil + \frac{2}{3}(n_2-1) \leq \frac{n_C}{2}+\frac{2}{3}n_2-\frac{2}{3}\\
				& \leq \frac{n}{2}+1-\frac{3}{2}n_2+\frac{2}{3}n_2-1 \leq \frac{n}{2}-\frac{5}{6}n_2 < \frac{n}{2}-2.
\end{align*}
The third and last subcase is when $n_1,n_2 \geq 2$. In this case, we have that $n_1,n_2,n_3,n_4 \geq 4$. Moreover, $n_C+n_1+n_2+n_3+n_4 \leq n+4$ from which we derive $n_C\leq n+4-\frac{4}{3}(n_1+n_2)-\frac{2\cdot 8}{3}= n-\frac{4}{3}(n_1+n_2)-\frac{4}{3}$.
 The diameter of $C$ is at most $\lceil (n_C-1)/2 \rceil$ as $C$ is 2-connected. Moreover, all the other 2-connected components traversed by $P_{1,2}$, except for $e_1$ and $e_2$ and $C$, form two 2-edge-connected graphs of diameter at most $\frac{2}{3}(n_1-1)$ and $\frac{2}{3}(n_2-1)$, respectively. Therefore, we have that
\begin{align*}
\dist(e_1,e_2)	& \leq \left\lceil \frac{n_C-1}{2}\right\rceil + \frac{2}{3}(n_1-1)+\frac{2}{3}(n_2-1)\\
				& \leq \frac{n_C}{2}+\frac{2}{3}(n_1+n_2)-\frac{4}{3}\\
				& \leq \frac{n}{2}-\frac{2}{3}(n_1+n_2)-\frac{2}{3}+\frac{2}{3}(n_1+n_2)-\frac{4}{3}\\
				& = \frac{n}{2}-2.
\end{align*}
This completes the proof.
\end{proof}

\noindent The following proposition shows an upper bound of the diameter of any \PS network that only depends on the cost of edges which close a triangle.
\begin{proposition}\label{prop:diam_PSN}
The diameter of any \PS network is at most $\sigma(2) + 2$.
\end{proposition}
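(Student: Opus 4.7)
The plan is to argue by contradiction: assume a \PS network $G$ has diameter $D>\sigma(2)+2$, and exhibit a non-edge whose addition strictly decreases the cost of both endpoints, contradicting pairwise stability. I would fix a shortest path $u_0,u_1,\dots,u_D$ realizing the diameter and consider the candidate 2-edges $u_iu_{i+2}$ for various $i$. Each such pair is a non-edge of $G$ (otherwise the path could be shortened) at graph distance exactly $2$, so its insertion would cost $\sigma(2)$, i.e.\ $\sigma(2)/2$ per endpoint.

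The key step is to show that, for a suitable index $i$, adding $e=u_iu_{i+2}$ strictly benefits both endpoints. For $u_i$, every path node $u_j$ with $j\geq i+2$ has its distance from $u_i$ drop from $j-i$ to at most $1+(j-i-2)=j-i-1$ via the shortcut, giving a saving of at least $D-i-1$ in total. Symmetrically, $u_{i+2}$ saves at least $i+1$ via the nodes $u_0,\dots,u_i$. Moreover, the cost of any pre-existing incident edge can only decrease after adding $e$, because adding an edge never lengthens any shortest path and $\sigma$ is non-decreasing; hence the net cost increase at either endpoint is at most $\sigma(2)/2$.

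It then suffices to find an integer $i\in\{0,\dots,D-2\}$ with $i+1>\sigma(2)/2$ and $D-i-1>\sigma(2)/2$. These two strict constraints describe an open interval for $i$ of length $D-\sigma(2)$, which under the assumption $D>\sigma(2)+2$ exceeds $2$ and therefore contains an integer (for instance, $i=\lceil\sigma(2)/2\rceil$ satisfies both bounds and lies in the required range). Any such $i$ yields a non-edge whose insertion strictly reduces the cost of both of its endpoints, violating pairwise stability; we conclude $D\leq\sigma(2)+2$.

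The main subtlety is the strictness of both inequalities, needed to match the strict-benefit condition that violates PS; verifying that $u_iu_{i+2}$ is genuinely absent from $G$ is immediate from the shortness of the path, and ignoring the incidental savings on other incident edges only makes the bookkeeping simpler rather than obstructing it.
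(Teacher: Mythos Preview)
Your proof is correct and follows essentially the same approach as the paper: take a diametral path, insert a $2$-edge shortcut $u_iu_{i+2}$, and show both endpoints strictly gain more in distance cost than the $\sigma(2)/2$ they pay. The only cosmetic differences are that the paper fixes the specific index $i=\lfloor D/2\rfloor-1$ (so both savings are at least $\lfloor D/2\rfloor$) rather than arguing existentially over an interval, and that you make explicit the observation---left implicit in the paper---that adding an edge cannot increase the cost of any other incident edge.
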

\begin{proof}
Consider a \PS network $G$ of diameter~$D$. 
Let $v_0, v_1,\ldots, v_{D}$ be a diametral path of $G$.
Consider the addition of the edge between $v_{\lfloor D/2\rfloor -1}$ and $v_{\lfloor D/2\rfloor+1}$ to network $G$. Each node $v_0,\dots,v_{\lfloor D/2\rfloor -1}$ becomes 1 unit closer to $v_{\lfloor D/2\rfloor+1}$; similarly, each node $v_{\lfloor D/2\rfloor+1}, \dots, v_D$ becomes 1 unit closer to $v_{\lfloor D/2\rfloor-1}$. In both cases, the distance cost of the considered agent decreases by at least $\lfloor D/2\rfloor$. Since the network is \PS, both agents $v_{\lfloor D/2\rfloor-1}$ and $v_{\lfloor D/2\rfloor+1}$ have no incentive in buying the considered edge. Therefore, $\sigma(2)/2 - \lfloor D/2\rfloor \geq 0$ from which we derive $D\leq \sigma(2) + 2$. \qedhere
\end{proof}

\noindent Finally, we prove an upper bound on the cost of non-bridge edges. This implies that all \PS networks contain only small minimal cycles, i.e., cycles where the shortest path between two nodes in the cycle is along the cycle.

\begin{proposition}\label{prop:max_cost_of_expensive_edges_PS}
In a \PS network, for all $k\notin\{2, 3, n\}$, the cost of any $k$-edge is $\sigma(k) < n\sigma(2)$.
If $\sigma(2)\leq \frac{1}{2}\sigma(3)$ holds, for all $k\notin\{2, n\}$, the cost of any $k$-edge is $\sigma(k) \leq n\sigma(2)$.
\end{proposition}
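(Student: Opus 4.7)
The plan is by contradiction. Suppose $G$ is a \PS network containing a $k$-edge $uv$ with $k\in\{4,\ldots,n-1\}$ but $\sigma(k)\geq n\sigma(2)$. I will show that at least one endpoint strictly benefits from deleting $uv$, violating pairwise stability. Throughout, I fix a shortest $u$--$v$ path $P = u, w_1, \ldots, w_{k-1}, v$ in $G-uv$ (of length $k$ by definition of a $k$-edge); this $P$ serves as the canonical detour whenever a use of edge $uv$ is replaced.

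The first ingredient is a bound on the distance-cost increases $\delta_u := \sum_w(d_{G-uv}(u,w)-d_G(u,w))$ and the analogous $\delta_v$. Any shortest $u$--$w$ path in $G$ uses the single edge $uv$ at most once, and substituting $P$ for it shows $d_{G-uv}(u,w)\leq d_G(u,w)+(k-1)$. A structural observation then partitions the vertices into those strictly closer to $u$, those strictly closer to $v$, and those equidistant: for an equidistant $w$, neither shortest $u$--$w$ nor shortest $v$--$w$ path can traverse $uv$ (otherwise the two distances would differ), so such $w$ contributes $0$ to both $\delta_u$ and $\delta_v$. This yields $\delta_u + \delta_v \leq (k-1)\cdot|\{w : d_G(u,w)\neq d_G(v,w)\}|$. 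The second ingredient bounds $\epsilon_u$, the total increase in the cost of the other edges incident to $u$ (symmetrically $\epsilon_v$). Proposition~\ref{prop:edge_price_in_PSN} applied at threshold $\sigma(4)\leq\sigma(k)$ forces every edge $uw$ with $w\neq v$ to satisfy $d_{G-uw}(u,w)\leq 3$, and substituting $P$ for $uv$ in such a short detour gives $d_{G-uv-uw}(u,w)\leq k+2$, so $c_{G-uv}(uw)\leq\sigma(k+2)$.

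Pairwise stability requires $\sigma(k)/2 \leq \delta_u+\epsilon_u$ (and symmetrically for $v$). Adding the two inequalities and substituting the bounds above, together with the convexity of $\sigma$ (Proposition~\ref{prop:basic_property_sigma}), will produce the contradictory inequality $\sigma(k) < n\sigma(2)$. The second statement of the proposition is obtained by the identical argument, but invoking the stronger form of Proposition~\ref{prop:edge_price_in_PSN} activated by the extra hypothesis $\sigma(2)\leq\sigma(3)/2$; this lowers the threshold to $\sigma(3)$ and thereby extends the conclusion to $k=3$.

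The main obstacle will be the tight control of $\epsilon_u+\epsilon_v$. Since the degrees of $u$ and $v$ may be $\Theta(n)$, a naive sum of $\sigma(k+2)$ across all incident edges easily swamps the saving $\sigma(k)/2$. The crux is to charge each per-edge cost increase to a specific vertex whose distance from $u$ or from $v$ strictly grows in $G-uv$, so that $\epsilon_u+\epsilon_v$ is absorbed into the same quantity controlling $\delta_u+\delta_v$; the convexity of $\sigma$ then closes the remaining gap and delivers the contradiction.
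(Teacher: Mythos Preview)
Your plan has two genuine gaps.

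\textbf{The $\epsilon$-terms.} Your bound $c_{G-uv}(uw)\leq\sigma(k+2)$ is far too coarse, and the proposed charging scheme is not a workable fix: the cost increase of a single edge $uw$ can be as large as $\sigma(k+2)-\sigma(2)$, while any vertex you could charge it to contributes at most $k-1$ to $\delta_u+\delta_v$, so there is no way to ``absorb $\epsilon_u+\epsilon_v$ into the same quantity controlling $\delta_u+\delta_v$'' for general convex $\sigma$. The paper's observation is much sharper and makes the whole issue disappear: in fact $\epsilon_u=\epsilon_v=0$. Indeed, if some length-$\leq 3$ detour for an edge $uw$ (with $w\neq v$) in $G-uw$ used the edge $uv$, then that detour is $u,v,\ldots,w$, and concatenating the edge $uw$ with the reversal of its $v$--$w$ tail gives a $u$--$v$ walk of length $\leq 3$ in $G-uv$, contradicting $k\geq 4$. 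So no other incident edge changes cost when $uv$ is deleted.

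\textbf{The missing second step.} Even granting $\epsilon_u=\epsilon_v=0$, the deletion argument only yields $\sigma(k)\leq(|V_u|+|V_v|)(k-1)\leq n(k-1)$. Convexity of $\sigma$ does \emph{not} turn this into $\sigma(k)<n\sigma(2)$: take $\sigma(x)=x^2$, $n=25$, $k=10$; then $\sigma(k)=100=n\sigma(2)$ while $n(k-1)=225$, so there is no contradiction from deletion plus convexity. The paper closes this gap by a separate \emph{addition} argument: inside the minimal $(k{+}1)$-cycle through $uv$, adding the $2$-edge $u'v$ (where $u'$ is the cycle-neighbour of $u$) improves each endpoint's distance cost by at least $\tfrac{k-2}{2}+1$, and pairwise stability then forces $\sigma(2)\geq k$. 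Combining, $\sigma(k)\leq n(k-1)<nk\leq n\sigma(2)$. Your outline omits this step entirely; without it the proof does not go through. The $k=3$ case under the hypothesis $\sigma(2)\leq\tfrac12\sigma(3)$ similarly needs an addition argument (there one gets $\sigma(2)\geq 2$, hence $\sigma(3)\leq 2n\leq n\sigma(2)$), which your sketch also glosses over.
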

\begin{proof}
Consider a \PS network $G$. 
Assume to the contrary that there is a non-bridge $k$-edge $uv$ in $G$ of cost $\sigma(k)$. 
Consider the deletion of the edge $uv$ by one of its endpoints, say $u$. 
Let $V_v$ (resp., $V_u$) be a subset of nodes such that all shortest paths between $u$ (resp., $v$) and any node in $V_v$ (resp., $V_u$) goes through the edge $uv$.  
By Proposition~\ref{prop:edge_price_in_PSN}, all other edges incident to either $u$ or $v$ are 2-edges and 3-edges. 
As a consequence, the deletion of the edge $uv$ does not increase the cost of the edges incident to $u$ and $v$. Therefore, the edge cost of $u$ (resp., $v$) decreases by $\frac{1}{2}c_G(uv)$, while the distance cost of $u$ increases by at most $|V_v|(\dist_{G-uv}(u,v)-1)=|V_v|(k-1)$ (resp., $|V_u|(k-1)$). 
Since $G$ is \PS, the $u$'s (resp., $v$) cost difference is greater than zero, i.e., $-\sigma(k)/2+|V_v|(k-1)\geq 0$ (resp.,  $-\sigma(k)/2+|V_u|\cdot(k-1)\geq 0$). 
We sum up the two inequalities and get $\sigma(k)\leq (|V_v|+|V_u|)\cdot(k-1)$. 
Note that $V_v\cap V_u=\emptyset$. 
Indeed, if there is $x\in V_v\cap V_u$, then $\dist_{G}(u,x)=1+\dist_{G}(v,x)=1+1+d_G(u,x)$, i.e., $0=2$. 
Therefore, $\sigma(k)\leq n(k-1)$.

If we assume $\sigma(2)\leq \frac{1}{2}\sigma(3)$, each node has at most one incident 3-edge according to Proposition~\ref{prop:edge_price_in_PSN}. 
Similarly to the above proof, we obtain $\sigma(3)\leq 2n$. 

Now we consider the addition of a 2-edge $u'v$ in the minimal cycle of length $(k+1)$ that contains the edge $uv$ by the node $v$ and a neighbor $u'$ of $u$ in the cycle.  
First, we consider $k\geq 4$. 
For both endpoints, this move improves the distance to at least $\frac{k+1-3}{2}$ nodes in the cycle. 
Moreover, by  Proposition~\ref{prop:edge_price_in_PSN}, all other of $v$'s incident edges are 2- or 3-edges; therefore $v$ has at least one neighbor $v'$ that is not in the cycle and $\dist_G(u', v')\geq k-1\geq 3$ (otherwise, it would not be a $k$-edge). 
Analogously, $u'$ has a neighbor outside of the cycle at distance at least 3 from $v$. 
This implies that both endpoints of the edge $u'v$ will improve their distance to at least $\frac{k-2}{2}+1$ nodes by 1 after adding the edge.   
Since $G$ is \PS, this move is not profitable, i.e., $\frac{\sigma(2)}{2} - \frac{k-2}{2}-1\geq 0$. 
Hence, $\sigma(2)\geq k$. 
 Combining this inequality with the inequality $\sigma(k)\leq n(k-1)$ from the first part of the proof, we get $\sigma(k)< n\sigma(2)$, if $k\geq 4$.
 If $k=3$ and $\sigma(2)\leq \frac{1}{2}\sigma(3)$, the addition of a 2-edge can improve the distance to only one node. 
 Since we assume that $G$ is \PS, $\frac{1}{2}\sigma(2)-1\geq 0$, i.e., $\sigma(2)\geq 2$. 
 Combining this inequality with the above inequality for 3-edges, we get $\sigma(3)\leq 2n\leq n\sigma(2)$. 
 The statement follows.\qedhere
\end{proof}

\section{Equilibrium Existence and Social Optima}
The \textit{clique graph} of $n$ nodes is denoted by $\mathbf{K}_n$.
A \textit{fan graph} $\mathbf{F}_n$ with $n$ nodes consists of a star with $n-1$ leaves $v_0,\dots,v_{n-2}$ augmented with all the edges of the form $v_{2i}v_{2i+1}$, for $i=0,\dots, \lfloor \frac{n-2}{2}\rfloor$, where all indices are computed modulo $n-2$ (see Figure~\ref{fig:fan_graphs} for examples). In other words, $\mathbf{F}_n$, with $n$ odd, is a star augmented with a perfect matching w.r.t. the star leaves,\footnote{In the literature, this graph is also known as {\em friendship graph} or {\em Dutch windmill graph}.} while  $\mathbf{F}_n$, with $n$ even, consists of $\mathbf{F}_{n-1}$ augmented with an additional node that is connected to the star center and any star leaf. 
\begin{figure}[h]
\centering
\includegraphics[width=7cm]{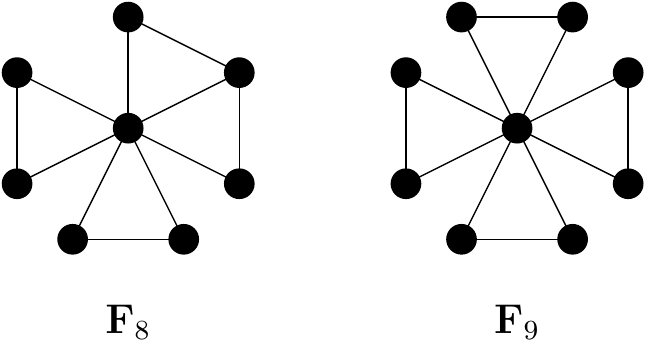}
\caption{Two examples of fan graphs.}
\label{fig:fan_graphs}
\end{figure}
Clique graphs and fan graphs play an important role since, as we will prove, the former are social optima when $\sigma(2) \leq 2$, while the latter are social optima when $\sigma(2)\geq 2$. Furthermore, we also show that clique graphs are \PS whenever $\sigma(2) \leq 2$, while (almost) fan graphs are \PS whenever $\sigma(2)\geq 2$.

\begin{theorem}\label{thm:fan_and_clique_are_OPT}
If $\sigma(2) < 2$, then $\mathbf{K}_n$ is the unique social optimum. If $\sigma(2) > 2$, then $\mathbf{F}_n$ is the unique social optimum. Finally, if $\sigma(2)=2$ any network of diameter 2 and containing only $2$-edges is a social optimum.\footnote{Hence, for $\sigma(2)=2$, $\mathbf{K}_n$ and $\mathbf{F}_n$ are also social optima.}
\end{theorem}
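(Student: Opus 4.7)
The plan is to rewrite the social cost by swapping summation orders as
\[
cost(G) = \sum_{uv\in E}c_G(uv) + 2\sum_{\{u,v\}}d_G(u,v).
\]
Evaluating this on the two candidate graphs is routine: $\mathbf{K}_n$ (for $n\ge 3$) has all $\binom{n}{2}$ edges as $2$-edges and every pair at distance $1$, giving $cost(\mathbf{K}_n)=\binom{n}{2}(\sigma(2)+2)$; a direct inspection of $\mathbf{F}_n$ confirms that every edge is a $2$-edge and that the diameter is exactly $2$, which, in light of the lower bound below, yields $cost(\mathbf{F}_n)=\lceil 3(n-1)/2\rceil(\sigma(2)-2)+4\binom{n}{2}$.

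The heart of the argument is the universal lower bound
\[
cost(G) \;\geq\; m\,(\sigma(2)-2) + 4\binom{n}{2}, \qquad m=|E(G)|,
\]
which follows because each edge has cost at least $\sigma(2)$ (by monotonicity of $\sigma$ and $d_{G-e}\ge 2$) and contributes $2\cdot 1$ to twice the Wiener index, while each non-edge pair contributes at least $4$. Equality holds iff every edge is a $2$-edge (property~(a)) and $G$ has diameter $\le 2$ (property~(b)). All three cases then follow directly. If $\sigma(2)=2$, the bound collapses to $4\binom{n}{2}$ independently of $m$ and is attained by precisely the networks in the statement. If $\sigma(2)<2$, the bound is strictly decreasing in $m$ and uniquely minimized at $m=\binom{n}{2}$; the equality conditions then pin $G=\mathbf{K}_n$. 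If $\sigma(2)>2$, the bound is strictly increasing in $m$, so one wants $m$ minimal subject to~(a) and~(b); a friendship-theorem-flavored counting argument (with a minor case split on the parity of $n$) shows that any connected graph meeting both conditions has $m\ge \lceil 3(n-1)/2\rceil$, with uniqueness at $\mathbf{F}_n$.

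The real work in the case $\sigma(2)>2$ is to rule out graphs that violate (a) or (b): for these the plain bound falls below $cost(\mathbf{F}_n)$ as soon as $m<\lceil 3(n-1)/2\rceil$, so a sharpened bound is needed. I would invoke Proposition~\ref{prop:basic_property_sigma}, which gives $\sigma(k)\ge (k/2)\sigma(2)$: each $k$-edge with $k\ge 3$ then carries an excess of at least $\sigma(2)/2$ over $\sigma(2)$, each bridge carries an excess of at least $((n-2)/2)\sigma(2)$, and each pair at distance $d>2$ adds an excess $2(d-2)$ to twice the Wiener index. The main obstacle is certifying that these excesses always compensate for the edge deficit $(\lceil 3(n-1)/2\rceil-m)(\sigma(2)-2)$; I expect the cleanest route to be a case analysis on the block-cut tree decomposition, where even one bridge already contributes $\sigma(n)\ge (n/2)\sigma(2)$ and thus dominates the deficit for $n$ not too small, while the few remaining small configurations can be settled by hand.
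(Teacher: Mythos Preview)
Your lower bound $cost(G)\ge m(\sigma(2)-2)+4\binom{n}{2}$ and the treatment of the cases $\sigma(2)\le 2$ are exactly what the paper does. For $\sigma(2)>2$ your plan diverges from the paper, and the part you yourself flag as the ``main obstacle'' is a genuine gap. A single $k$-edge with $k\ge 3$ contributes excess only $\sigma(2)/2$ and a single far pair contributes excess only $2$, while the edge deficit $(\lceil 3(n-1)/2\rceil-m)(\sigma(2)-2)$ can be of order $n\,\sigma(2)$; so you must argue that small $m$ forces \emph{many} violations of (a) or (b), and your block-cut-tree sketch does not do this. In particular, your heuristic ``even one bridge already contributes $\sigma(n)\ge(n/2)\sigma(2)$ and thus dominates'' is useless precisely in the hard case of a $2$-edge-connected graph with $n\le m<\lceil 3(n-1)/2\rceil$, where there are no bridges at all. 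Separately, the ``friendship-theorem-flavored'' step is not the friendship theorem: conditions (a) and (b) together say only that every pair of vertices has \emph{at least} one common neighbor, whereas the Erd\H{o}s--R\'enyi--S\'os hypothesis requires \emph{exactly} one. The extremal claim $m\ge\lceil 3(n-1)/2\rceil$ with uniqueness at $\mathbf{F}_n$ is correct, but it needs its own argument.

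The paper avoids both difficulties with one reduction you are missing: if a social optimum $G$ had a bridge $uv$, pick any neighbor $u'\ne v$ of $u$; then $G+u'v$ has strictly smaller social cost, because the bridge of cost $\ge\sigma(4)\ge 2\sigma(2)$ becomes two $2$-edges while all distances weakly drop. So one may assume $G$ is $2$-edge-connected from the start. The paper then looks at the subgraph $H\subseteq G$ spanned by the $2$-edges. Each nontrivial component $C_i$ of $H$ can be grown from a triangle by repeatedly attaching one or two new vertices so as to create a new triangle, which gives $m_i\ge n_i-1+\lceil(n_i-1)/2\rceil$; if $m<\lceil 3(n-1)/2\rceil$ this forces $H$ to have $k\ge 2$ components, and $2$-edge-connectivity then supplies at least $k$ edges of $G$ between components, each of cost $\ge\sigma(3)\ge\tfrac32\sigma(2)$. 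A direct count now shows the total \emph{edge} cost of $G$ already exceeds $\lceil 3(n-1)/2\rceil\,\sigma(2)$, which together with the trivial distance bound beats $cost(\mathbf{F}_n)$. For $m=\lceil 3(n-1)/2\rceil$ the paper pins down $G\cong\mathbf{F}_n$ by a short structural argument (minimum degree $2$, no vertex of degree $n-1$, hence a degree-$2$ vertex whose two neighbors form a dominating edge, followed by an edge count). This route is shorter than the excess-accounting you envisaged and sidesteps the compensation inequality entirely.
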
 
\begin{proof}
Let $G$ be a social optimum and assume that $G$ contains $m$ edges. We provide a lower bound $LB(m)$ to the social cost of $G$. The edge cost of $G$ is lower bounded by $m \sigma(2)$; moreover, such a lower bound is matched only by networks whose edges are all 2-edges. The distance cost of $G$ is lower bounded by $2(n(n-1)-2m)+2m=2n(n-1)-2m$ as there are $2m$ ordered pairs of nodes that are at distance 1 (two pairs for each of the $m$ edges) and $n(n-1)-2m$ ordered pairs of nodes that are at distance greater than or equal to 2. We observe that the lower bound on the distance cost is matched only by networks of diameter~2. Therefore, the social cost is lower bounded by 
\[LB(m):=2n(n-1)+(\sigma(2)-2)m\]
and the value $LB(m)$ can be matched only by networks of diameter 2 whose edges are all 2-edges.

Clearly, when $\sigma(2) < 2$, we have that $LB(m)$ is minimized when $m$ is maximized; furthermore, $\mathbf{K}_n$ is the only network whose social cost matches the lower bound. Hence, for $\sigma(2) < 2$, $\mathbf{K}_n$ is the unique social optimum. For the case $\sigma(2)=2$, we have that the value $LB(m)=2n(n-1)$ is matched by any network of diameter $2$ that contains only $2$-edges. Hence, all and only such networks are social optima.

Now we prove the theorem statement for the remaining case in which $\sigma(2) > 2$. We have to show that $\mathbf{F}_n$ is the unique social optimum. First of all we observe that $LB(m)$ is minimized when $m$ is minimized. Since the social cost of $\mathbf{F}_n$ matches the lower bound $LB(m')$ with $m'=n-1+\left\lceil\frac{n-1}{2}\right\rceil$, we have that $m \leq n-1+\lceil \frac{n-1}{2}\rceil$. Moreover, the social cost of the network is bounded only if the network is connected. This implies that $G$ is connected, which in turn implies that $m \geq n-1$. Therefore, to prove that $\mathbf{F}_n$ is the unique social optimum, it is enough to prove that any network with $m$ edges, with $n-1 \leq m \leq n-1+\left\lceil \frac{n-1}{2}\right\rceil$, has a social cost that is strictly larger than $2n(n-1)+(\sigma(2)-2)\left((n-1)+\left\lceil \frac{n-1}{2} \right\rceil\right)$, unless it is  isomorphic to $\mathbf{F}_n$.

For the rest of the proof we can also assume that $n \geq 4$. In fact, for $n=2$ it is clear that $\mathbf{F}_2=\mathbf{K}_2$ is the social optimum since this is the only connected network of $2$ nodes. Moreover, for $n=3$ we have that $\mathbf{F}_3=\mathbf{K}_3$ is again the only social optimum. Indeed, $\mathbf{F}_3$ has a social cost of $3\sigma(2)+6$, while the unique other connected network -- i.e., the path of length 2 -- has a social cost of $2\sigma(3)+8$. Since by Proposition~\ref{prop:basic_property_sigma}, $\sigma(3)\geq \frac{3}{2}\sigma(2)$ (indeed $3\geq 1\cdot 2+0.5\cdot 2$), it follows that $2\sigma(3)+8 \geq 3\sigma(2)+8 > 3\sigma(2)+6$.

Moreover, we can also assume that $G$ is 2-edge-connected. Indeed, let $G$ be a network that contains a bridge, say $uv$. W.l.o.g., let $u' \neq v$ be a neighbor of $u$, whose existence is guaranteed since $n\geq 4$. The distance cost of $G+u'v$ is strictly smaller than the distance cost of $G$. Moreover, the edge cost of $G+u'v$ is at most the edge cost of $G$. In fact, the cost of the bridge $uv$ in $G$ is at least $\sigma(4)$, while the cost of the two edges $uv$ and $u'v$ in $G+u'v$ is at most $2\sigma(2)$ and, by Proposition~\ref{prop:basic_property_sigma}, we have that $\sigma(4) \geq 2\sigma(2)$. As a consequence, the social cost of $G+u'v$ is strictly smaller than the social cost of $G$.

We divide the proof into two cases, depending on whether $m < n-1+\left\lceil \frac{n-1}{2}\right\rceil$ or not. 

We consider the case in which  $m < n-1+\left\lceil \frac{n-1}{2}\right\rceil$. 
Consider the subgraph $H$ of $G$ that is induced by 2-edges only. Such a subgraph contains $k \geq 1$ connected components, $h$ of which are singleton nodes. Let $C_1,\dots,C_{k-h}$ be the non-trivial connected components of $H$. Each $C_i$ contains $n_i \geq 3$ nodes and $m_i$ edges, where $m_i \geq n_i-1 + \left \lceil\frac{n_i-1}{2}\right\rceil$. Indeed, each $C_i$ can be generated starting from a triangle, i.e., $\mathbf{K}_3$, and by iteratively adding either one or two nodes so as the resulting induced subgraph contains at least one more triangle than before. 

Clearly, at each step, we add either one node and at least two edges or two nodes and at least three edges. Obviously, the number of edges is minimized when we add two new nodes and exactly three edges at each step. Therefore, when $n_i$ is odd, i.e., $n_i-3$ is even, we add at least three edges for every two nodes; when $n_i$ is even, i.e., $n_i-3$ is odd, we add at least three edges for every two nodes except one node and at least two edges for the remaining node. As a consequence, when $n_i$ is odd, we have $m_i \geq 3+3\frac{n_i-3}{2} = n_i-1+\left\lceil \frac{n_i-1}{2}\right\rceil$; when $n_i$ is even, we have $m_i \geq 3+3\frac{n_i-4}{2}+2= n_i-1+\left\lceil \frac{n_i-1}{2}\right\rceil$. In either case, $m_i \geq n_i-1+\left\lceil \frac{n_i-1}{2}\right\rceil$.

First of all, we observe that $n=h+\sum_{i=1}^{k-h}n_i$. Furthermore, since we are assuming $m < n-1+\left\lceil \frac{n-1}{2}\right\rceil$ it must be the case that $k \geq 2$. Indeed, for $k=1$, $h$ would be equal to $0$ and therefore $m_1 \geq n-1+\left\lceil \frac{n-1}{2}\right\rceil$. Finally, since we are assuming that $G$ is 2-edge-connected, there are at least $k$ edges of $G$ each of which connects a node of one connected component with a node of another connected component. Clearly, the cost of each of such $k$ edges is greater than or equal to $\sigma(3)$ each. Therefore, since by Proposition~\ref{prop:basic_property_sigma} $\sigma(3) \geq \frac{3}{2}\sigma(2)$, the overall edge cost of the network $G$ is lower bounded by
\begin{align*}
k\sigma(3)+\sigma(2)\sum_{i=1}^{k-h}m_i	& \geq \frac{3}{2}k\sigma(2)+\sigma(2)\sum_{i=1}^{k-h}\left(\frac{3}{2}(n_i-1)\right)\\
										& = \frac{3}{2}k\sigma(2)+ \frac{3}{2}(n-k)\sigma(2)\\
										& > \left(n-1+\left\lceil \frac{n-1}{2}\right\rceil\right)\sigma(2).
\end{align*}
As the distance cost of $G$ is lower bounded by $2n(n-1)-2m$, the overall social cost of $G$ is strictly larger than $2n(n-1)+(\sigma(2)-2)\left((n-1)+\left\lceil \frac{n-1}{2} \right\rceil\right)$, i.e., the social cost of $\mathbf{F}_n$. Therefore, no network with $m < n-1+\left\lceil \frac{n-1}{2}\right\rceil$ can be a social optimum.

We now consider the remaining case in which $m = n-1+\left\lceil \frac{n-1}{2}\right\rceil$ and show that $G$ is isomorphic to $\mathbf{F}_n$. First of all we observe that the social cost of $G$ cannot be smaller than the social cost of $\mathbf{F}_n$ as the social cost of $\mathbf{F}_n$ matches the value $LB(m)$. This implies that $\mathbf{F}_n$ is a social optimum. For the sake of contradiction, assume that $G$ is not isomorphic to $\mathbf{F}_n$. We show that $G$ must satisfy some structural properties, based on three important observations. The first observation is that $G$ consists of only 2-edges and has diameter 2, as otherwise the social cost of $G$ would be strictly larger than the value $LB(m)$.
The second observation is that $G$ has minimum degree equal to 2. Indeed, any network of minimum degree greater than or equal to 3 would have at least $m \geq \frac{3}{2} n > n-1+\left\lceil \frac{n-1}{2}\right\rceil$ edges. The third and last observation is that $G$ cannot have a node of degree $n-1$. As a consequence of these three observations, $G$ has the following structure:
\begin{itemize}
\item it contains a node $v$ that is connected to only two nodes, say $u$ and $u'$;
\item it contains the edge $uu'$ (otherwise the edges $uv$ and $u'v$ would not be 2-edges);
\item $u$ and $u'$ form a dominating set (otherwise the diameter of $G$ would be at least 3);
\item there is at least one node that is connected to $u$ but not to $u'$ and one node that is connected to $u'$ but not to $u$ (otherwise $G$ would contain a spanning star).
\end{itemize}
Let $A$ be the set of nodes different from $u'$ that are connected to $u$ but not to $u'$; similarly, let $B$ be the set of nodes different from $u$ that are connected to $u'$ but not to $u$. Finally, let $C$ be the set of nodes different from $v$ that are connected to both $u$ and $u'$ (see Figure~\ref{fig:Figure_thm1}).
\begin{figure}[h]
\centering
\includegraphics[width=6cm]{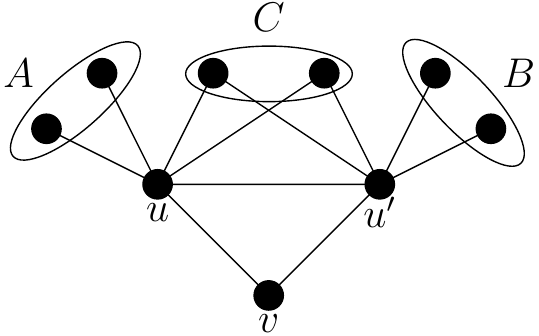}
\caption{The structure of a 2-edge-connected network of diameter 2 with minimum degree 2 that is not isomorphic to $\mathbf{F}_n$. The node $v$ has degree 2, the sets $A$ and $B$ are not empty, while $C$ can be empty. We have $|C| \geq 1$ if no edge $ab$ with $a \in A$ and $b \in B$ exists. All the edges of the network are 2-edges only if each node $a \in A$ (resp., $b \in B$) is connected either with a node of $C$ or another node of $A$ (resp., $B$).}
\label{fig:Figure_thm1}
\end{figure}
We observe that $|A|+|B|+|C|=n-3$. Each node $a$ of $A$ must be connected to another node of $A \cup C$ as otherwise the edge $ua$ would not be a 2-edge; similarly, each node $b$ of $B$ must be connected to another node of $B \cup C$ as otherwise the edge $u'b$ would not be a 2-edge. Furthermore, since $G$ has diameter 2, it must be the case that for any two nodes $a \in A$ and $b \in B$ either $ab$ is an edge of $G$ or there is node $z \in C$ such that $az$ and $zb$ are both edges of $G$.

If we assume the existence of an edge $ab$ for $a \in A$ and $b \in B$, then we can lower bound the number of edges of $G$ with 
$
m \geq 1+n+|C|+\left\lceil \frac{|A|}{2}\right\rceil+\left\lceil \frac{|B|}{2}\right\rceil \geq n+1 + \left\lceil \frac{n-3}{2}\right\rceil  > n-1 + \left\lceil \frac{n-1}{2}\right\rceil = m.
$

If we assume that no edge $ab$ with $a \in A$ and $b \in B$ exists, then $|C| \geq 1$ and we can lower bound the number of edges of $G$ with $m \geq n + |C| + |A| + |B| = n+(n-3) > n-1 + \left\lceil \frac{n-1}{2}\right\rceil=m$, as $n\geq 3+|A|+|B|+|C| > 6$. 

In either case, we have obtained a contradiction. Hence, $\mathbf{F}_n$ is the unique social optimum when $\sigma(2) > 2$.
\end{proof}

\noindent Now we prove the existence of \PS networks. For this we consider a modified fan graph $\mathbf{F}'_n$ that is equal to $\mathbf{F}_n$ if $n$ is odd. If $n$ is even, $\mathbf{F}'_n$ consists of  $\mathbf{F}_{n-1}$ and one additional node connected to the center. 

\begin{theorem}\label{thm:fan_and_clique_are_PS}
For $\sigma(2)\geq 2$, a the modified fan graph $\mathbf{F}'_n$ is a \PS network, otherwise a clique is the unique \PS network.
\end{theorem}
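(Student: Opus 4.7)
The plan is to split on the value of $\sigma(2)$ and verify the two conditions of pairwise stability by direct case analysis. First I would handle $\sigma(2)<2$: to show that $\mathbf{K}_n$ is pairwise stable observe that there are no edges to add, and for any edge $uv$ of $\mathbf{K}_n$ every other edge remains a 2-edge in $\mathbf{K}_n-uv$, so the cost change to either endpoint from deleting $uv$ is $-\sigma(2)/2+1>0$, making the deletion unprofitable. For uniqueness, given any connected $G\neq\mathbf{K}_n$ on $n$ nodes I would pick a non-edge $uv$ at graph distance exactly $2$ (such a pair exists on any shortest path of length at least $2$); adding $uv$ creates a 2-edge, so each endpoint's edge share is $\sigma(2)/2<1$ while its distance to the other drops by at least $1$ and every other edge's cost can only weakly decrease. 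Thus both endpoints strictly benefit and $G$ cannot be pairwise stable.

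For the case $\sigma(2)\geq 2$, I would enumerate the edge types in $\mathbf{F}'_n$: the center-leaf and matching edges are all 2-edges of cost $\sigma(2)$, and the pendant edge $cw$ (present only when $n$ is even) is the unique bridge, of cost $\sigma(n)$. For every non-edge, both endpoints lie at distance $2$ and at least one of them is a friendship-leaf $v$ of degree $2$; adding the edge creates a 2-edge while $v$'s two other incident edges remain 2-edges, so $v$'s cost change is $\sigma(2)/2-1\geq 0$, settling the non-edge side of pairwise stability. For edges, removing the pendant bridge disconnects the graph and is infinitely costly. The critical case is removing a center-leaf edge $cv$ or a matching edge $vv^\ast$ of a triangle $\{c,v,v^\ast\}$: afterward the remaining triangle edge incident to the shared removed endpoint becomes a bridge, so its cost jumps from $\sigma(2)$ to $\sigma(n)$. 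A direct computation shows that the cost change for the endpoint whose distance profile is least perturbed equals
\[
-\frac{\sigma(2)}{2}+\frac{\sigma(n)-\sigma(2)}{2}+1=\frac{\sigma(n)}{2}-\sigma(2)+1,
\]
and Proposition~\ref{prop:basic_property_sigma} (with $x=n$ decomposed into $n/2$ copies of $2$) gives $\sigma(n)\geq (n/2)\sigma(2)\geq 2\sigma(2)$ whenever $n\geq 4$, making this non-negative. The other endpoint of $cv$ (the leaf side) additionally has $n-3$ distance-$2$ neighbors pushed out by $1$, so its stability condition is strictly weaker.

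The main obstacle is accurately tracking the inflation of sister-edge costs inside a triangle: although the two former endpoints of the deleted edge remain at distance $2$, the deletion turns the other triangle edge into a bridge, and balancing the extra cost share $(\sigma(n)-\sigma(2))/2$ against the saved share $\sigma(2)/2$ and the lost unit of distance is exactly the step where the assumption $\sigma(2)\geq 2$ enters via the convexity bound on $\sigma(n)$. Very small cases ($n\leq 3$) can be checked by direct inspection.
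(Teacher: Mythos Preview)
Your proposal is correct and follows essentially the same route as the paper: split on $\sigma(2)$, show that any missing $2$-edge is a joint improvement when $\sigma(2)<2$, and for $\sigma(2)\geq 2$ verify pairwise stability of $\mathbf{F}'_n$ by observing that adding a non-edge gains only one distance unit against a share of $\sigma(2)/2$, while deleting a triangle edge turns a sister edge into a bridge and hence raises the edge cost by $(\sigma(n)-2\sigma(2))/2$. Your write-up is in fact a bit more explicit than the paper's (you check that $\mathbf{K}_n$ itself is stable and track sister-edge inflation carefully); one small slip is that the hypothesis $\sigma(2)\geq 2$ is actually used only in the \emph{addition} step, not in the deletion step, which for $n\geq 4$ follows from convexity alone.
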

\begin{proof}
First we show that if $\sigma(2)< 2$, any \PS network is a clique. 
Assume to the contrary that there is a network $G$ that is \PS but at least one edge is missing. 
Adding any missing 2-edge costs $\sigma(2)/2$ for both its endpoints and improves the distance cost by at least 1. 
Thus, if $\sigma(2) < 2$, this move is an improvement.

Next, we prove that the modified fan graph is \PS for  $\sigma(2)\geq 2$. 
Assume $n\geq 3$, otherwise $\mathbf{F}_n'$ is trivially stable.
If $n$ is odd, deletion of any edge from $\mathbf{F}_n'$ increases the edge cost by $(\sigma(n)-2\sigma(2))/2 > 0$ and increases distance between its endpoints by 1. 
Thus, any edge removal is not an improvement. 
On the other hand, buying any edge which is not present in $\mathbf{F}_n'$ costs $\sigma(2)/2$ and improves the distance only between its endpoints, i.e., it improves the distance cost by 1. 
Hence, since $\sigma(2)\geq 2$, an addition of any extra edge to the modified fan graph is not an improvement. 

If $n$ is even, then the pairwise stability of $\mathbf{F}_n'$ follows from an analogous proof as for odd $n$ and from the observation that creating an edge with a leaf is not profitable.
\end{proof}

\section{PoA and PoS}
Here we prove upper and lower bounds to the PoA and PoS.
\begin{theorem}\label{thm:UB_PoA_PSN}
The PoA of the SNCG is in $O\left(\min\{\sigma(2),n\}+\frac{\sigma(n)}{n\max\{\sigma(2),n\}}\right)$. For the class of 2-edge-connected networks the PoA is in $O\big(\min\{\sigma(2),n\}\big)$.
\end{theorem}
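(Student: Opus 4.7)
The plan is to bound the cost of an arbitrary \PS network $G$ from above and compare it against a matching lower bound on $opt_n$ extracted from Theorem~\ref{thm:fan_and_clique_are_OPT}. Since the social optimum is $\mathbf{K}_n$ when $\sigma(2)\leq 2$ and $\mathbf{F}_n$ when $\sigma(2)\geq 2$, a direct calculation gives $opt_n=\Theta(n(n+\sigma(2)))=\Theta(n\max\{n,\sigma(2)\})$. It therefore suffices to establish $cost(G) = O(n^2\sigma(2) + \sigma(n))$ for every \PS network, and $O(n^2\sigma(2))$ whenever $G$ is additionally 2-edge-connected.

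For the distance contribution, Proposition~\ref{prop:diam_PSN} bounds the diameter of $G$ by $\sigma(2)+2$; combining this with the trivial bound $D\leq n-1$ gives $\sum_{u,v} d_G(u,v) \leq n(n-1)\min\{\sigma(2)+2,\, n-1\} = O(n^2\min\{\sigma(2),n\})$, which is absorbed by the $O(n^2\sigma(2))$ edge-cost term when taking the ratio to $opt_n$.

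For the edge cost I would partition $E(G)$ by edge type. The 2-edges contribute at most $\binom{n}{2}\sigma(2) = O(n^2\sigma(2))$. Every non-bridge $k$-edge with $k\geq 4$ costs at least $\sigma(4)$, so Proposition~\ref{prop:edge_price_in_PSN} limits their number to $n/2$ and Proposition~\ref{prop:max_cost_of_expensive_edges_PS} bounds each cost by $n\sigma(2)$, for a combined $O(n^2\sigma(2))$. Proposition~\ref{prop:2_bridges_PSN} gives at most three bridges, contributing $O(\sigma(n))$ in total. The 3-edges need a case split: if $\sigma(3)<2\sigma(2)$, the naive bound $\binom{n}{2}\cdot 2\sigma(2)$ suffices; otherwise the hypothesis $2\sigma(2)\leq \sigma(3)$ activates the stronger parts of Propositions~\ref{prop:edge_price_in_PSN} and~\ref{prop:max_cost_of_expensive_edges_PS}, limiting the number of 3-edges to $n/2$ and each cost to at most $n\sigma(2)$. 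Either way, 3-edges contribute $O(n^2\sigma(2))$, so the total is $O(n^2\sigma(2) + \sigma(n))$.

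Dividing by $opt_n = \Theta(n(n+\sigma(2)))$ yields
\[
\mathrm{PoA} = O\!\left(\frac{n\sigma(2)}{n+\sigma(2)} + \frac{\sigma(n)}{n(n+\sigma(2))}\right) = O\!\left(\min\{\sigma(2),n\} + \frac{\sigma(n)}{n\max\{n,\sigma(2)\}}\right),
\]
where the first term is simplified via $\frac{n\sigma(2)}{n+\sigma(2)} \leq \min\{\sigma(2),n\}$ and the second via $n+\sigma(2) = \Theta(\max\{n,\sigma(2)\})$. Restricting to 2-edge-connected \PS networks removes bridges from the edge budget, so the $\sigma(n)$ term vanishes and the bound collapses to $O(\min\{\sigma(2),n\})$. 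The main obstacle is the 3-edge bookkeeping, because the sharper bounds of Propositions~\ref{prop:edge_price_in_PSN} and~\ref{prop:max_cost_of_expensive_edges_PS} only apply under $2\sigma(2)\leq \sigma(3)$; a clean dichotomy on the ratio $\sigma(3)/\sigma(2)$ is the key to making the two regimes line up with the same $O(n^2\sigma(2))$ budget.
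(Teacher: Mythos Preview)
Your proposal is correct and follows the paper's strategy closely: bound the distance cost via the diameter bound of Proposition~\ref{prop:diam_PSN}, partition the edge cost by edge type using Propositions~\ref{prop:edge_price_in_PSN}, \ref{prop:2_bridges_PSN}, and~\ref{prop:max_cost_of_expensive_edges_PS}, and compare against $opt_n=\Theta(n\max\{n,\sigma(2)\})$ from Theorem~\ref{thm:fan_and_clique_are_OPT}. The one substantive difference is in the regime $\sigma(3)<2\sigma(2)$: the paper invokes Mantel's Theorem (the subgraph of $3$-edges is triangle-free, hence has at most $n^2/4$ edges) to control the $3$-edge contribution, whereas you simply use the trivial bound $\binom{n}{2}$ on the number of $3$-edges together with $\sigma(3)<2\sigma(2)$, which already yields $O(n^2\sigma(2))$---so your route is slightly more elementary and Mantel is not actually needed. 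One small point: the paper disposes of $\sigma(2)<2$ up front (there $\mathbf{K}_n$ is the unique \PS network and the social optimum, so $\mathrm{PoA}=1$), while your argument treats all $\sigma(2)$ uniformly; this is harmless for the big-$O$ statement but worth making explicit, since your intermediate equality $n(n-1)\min\{\sigma(2)+2,n-1\}=O(n^2\min\{\sigma(2),n\})$ is only literally true once $\sigma(2)$ is bounded away from~$0$.
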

\begin{proof}
By Theorem~\ref{thm:fan_and_clique_are_OPT} and Theorem~\ref{thm:fan_and_clique_are_PS}, we only need to focus on the case $\sigma(2) \geq 2$. Indeed, when $\sigma(2) < 2$, $\mathbf{K}_n$ is the unique \PS network as well as the unique social optimum and, therefore, the PoA is equal to 1.

For the rest of the proof we assume that $\sigma(2) \geq 2$. By Theorem~\ref{thm:fan_and_clique_are_OPT}, $\mathbf{F}_n$ is a social optimum of cost $\Omega\big(n^2+\sigma(2)n\big)=\Omega\big(n\max\{\sigma(2),n\}\big)$.
Consider a \PS network $G$ of maximum social cost for a given number of nodes $n$. Let $D$ be the diameter of $G$.
A trivial upper bound for the distance cost of the network is $n(n-1)\cdot D$. 
By Proposition 4, the network diameter is at most $\sigma(2)+2$, hence the distance cost of $G$ is at most $(\sigma(2)+2)\cdot n(n-1)$.

Now we will show an upper bound for the edge cost. 
Let $k_i$ denote the number of $i$-edges in $G$. 
By Proposition~\ref{prop:2_bridges_PSN}, $G$ has at most 3 bridges. 
Hence, for any \PS network we have that $k_n\leq 3$; if the network is additionally 2-edge-connected, then $k_n=0$. 
We consider two cases, depending on whether $2\sigma(2)\leq \sigma(3)$ or not.

We consider the case $2\sigma(2)\leq \sigma(3)$. By Proposition~\ref{prop:edge_price_in_PSN}, each node has at most one incident $i$-edge where $3\leq i < n$.  
Moreover, by Proposition~\ref{prop:max_cost_of_expensive_edges_PS}, $\sigma(i)\leq n\sigma(2)$ for any $i\geq 3$. 
Then the overall edge-cost of the network is at most 
\begin{align*}
&k_2\cdot \sigma(2) + \sum\limits_{i=3}^{n-1}\left(\sigma(i)\cdot k_i\right) + k_n\sigma(n)\\
&\leq \left(\frac{n(n-1)}{2}-\sum\limits_{i=3}^{n-1} k_i\right)\cdot\sigma(2) + n\sigma(2)\sum\limits_{i=3}^{n-1} k_i  + k_n\sigma(n)\\
&\leq \sigma(2)\cdot\frac{n(n-1)}{2} + (n-1)\sigma(2)\cdot \frac{n}{2} + k_n\sigma(n)\\
& \leq \sigma(2) n^2 + k_n\sigma(n).
\end{align*}
As a consequence, the PoA is at most in $\frac{2(\sigma(2)+1)n^2 + k_n\sigma(n)}{\Omega\big(n\max\{\sigma(2),n\}\big)}$.
It implies that the PoA is in $O\left(\min\{\sigma(2),n\}+\frac{\sigma(n)}{n\max\{\sigma(2),n\}}\right)$, while for the class of 2-edge-connected networks, the PoA is in $O\big(\min\{\sigma(2),n\}\big)$.

If $2\sigma(2) > \sigma(3)$, the upper bound for the edge cost from Proposition~\ref{prop:max_cost_of_expensive_edges_PS} holds for all $i$-edges such that $4\leq i\leq n-1$. 
To estimate the number of 3-edges, we use result from extremal graph theory. Mantel's Theorem~\cite{Mantel07} we have $k_3\leq \frac{n^2}{4}$. 
Then the edge cost of $G$ is at most 
\begin{align*}
&k_2\cdot \sigma(2) + k_3\cdot\sigma(3) + \sum\limits_{i=4}^{n-1}\left(\sigma(i)\cdot k_i\right) + k_n\sigma(n)\\
&\leq \left(\frac{n^2}{2}-\sum\limits_{i=3}^{n-1} k_i\right)\cdot\sigma(2) + \frac{n^2}{4}\sigma(3) +  n\sigma(2)\frac{n}{2}  + k_n\sigma(n)\\
&< \sigma(2)\cdot n^2 +\frac{n^2}{4}\cdot\sigma(2) + k_n\sigma(n)\\
&= \frac{5}{4}\sigma(2)\cdot n^2 + k_n\sigma(n).
\end{align*}
As in the previous case, we get that the PoA is in $O\left(\min\{\sigma(2),n\}+\frac{\sigma(n)}{n\max\{\sigma(2),n\}}\right)$, while for the class of 2-edge-connected networks, the PoA is $O\big(\min\{\sigma(2),n\}\big)$.
\end{proof}

\noindent It is worth noticing that the high inefficiency of worst case \PS networks in Theorem~\ref{thm:UB_PoA_PSN} follows from the existence of bridges in a network. The PoA is much better in bridge-free \PS networks. Such networks can for example evolve via edge additions starting from a 2-edge-connected network.  A real-world example for this would be co-authorship networks of authors with at least two papers.

We now prove lower bounds on the PoA. We start with the construction of a \PS 2-edge-connected network with high social cost and a diameter in $\Omega(\sigma(2))$.

\begin{lemma}\label{lemma:pairwise_stable_network_large_diameter}
There are 2-edge-connected \PS networks with $n=\Omega(\sigma(2))$ nodes, social cost in $\Omega\big(\sigma(2)n^2\big)$, and diameter of at least $\frac{\sigma(2)}{4}$.
\end{lemma}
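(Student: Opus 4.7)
The plan is to exhibit an explicit family of 2-edge-connected \PS networks and verify the three claimed properties. A natural candidate is the ``cycle of triangles'' $G_\ell$: take $\ell$ spine vertices $v_0,\dots,v_{\ell-1}$ arranged in a cycle (indices mod $\ell$), and for each spine edge $v_i v_{i+1}$ attach a helper $w_i$ adjacent to both $v_i$ and $v_{i+1}$. Then $n = 2\ell$, every edge lies in a triangle (hence is a 2-edge of cost $\sigma(2)$), and the diameter is $\lfloor \ell/2\rfloor$. Fixing $\ell$ proportional to $\sigma(2)$ (tuned to satisfy the PS analysis below) gives $n = \Omega(\sigma(2))$; the social cost $\Omega(\sigma(2)\,n^2)$ follows because, with average pairwise distance $\Theta(\ell) = \Theta(\sigma(2))$ and $\Theta(\sigma(2))$ vertices, the distance cost alone is $\Theta(n^2 \ell) = \Theta(\sigma(2)\,n^2)$.

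The heart of the proof is pairwise stability, which I would split into two conditions.

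\emph{No profitable edge deletion.} Removing a spine edge $v_i v_{i+1}$ turns the incident triangle edges $v_i w_i$ and $v_{i+1} w_i$ into $(\ell-1)$-edges, because the only remaining short path between their endpoints now has to traverse the rest of the cycle. By Proposition~\ref{prop:basic_property_sigma}, $\sigma(\ell-1) \geq \tfrac{\ell-1}{2}\sigma(2)$, so the induced edge-cost increase (plus the non-negative change in distance cost) exceeds the savings $\sigma(2)/2$. The removal of a triangle edge $v_i w_i$ is handled symmetrically: it transforms the adjacent spine edge into an $(\ell-1)$-edge.

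\emph{No profitable edge addition.} For any candidate chord at distance $d$, the per-endpoint cost is $\sigma(d)/2$. By the rotational symmetry of $G_\ell$ it suffices to consider chords of the form $v_0 v_d$. A direct distance computation, splitting the contribution to $v_0$'s total distance savings into $v$-vertices and $w$-vertices, gives a bound of the form $2(d-1)\lfloor\ell/2\rfloor + O(d^2)$. Proposition~\ref{prop:basic_property_sigma} gives the convex lower bound $\sigma(d) \geq \tfrac{d}{2}\sigma(2)$, and a short case analysis (splitting small $d$ from $d \geq 4$) shows the cost dominates the savings once the constant in $\ell = \Theta(\sigma(2))$ is chosen small enough that the resulting diameter $\lfloor \ell/2\rfloor$ still exceeds $\sigma(2)/4$.

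The main obstacle will be the chord-addition analysis, which has to hold uniformly in $d$: small $d$ (in particular $d=2,3$) is restrictive because the convex lower bound on $\sigma(d)$ is weakest there, while intermediate $d$ around $\ell/4$ is restrictive because the distance savings are maximized. The key leverage is Proposition~\ref{prop:basic_property_sigma}, applied both to $\sigma(d)$ (to lower-bound the edge cost) and implicitly to ensure that the triangulated structure, by providing length-$2$ alternative paths, limits how many distance savings a single chord can produce.
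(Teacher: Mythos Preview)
Your cycle-of-triangles construction is a natural idea and does yield the asymptotic claims (2-edge-connected, \PS, $n=\Omega(\sigma(2))$, social cost $\Omega(\sigma(2)n^2)$), but it cannot deliver the specific diameter bound $\sigma(2)/4$ stated in the lemma. The gap is in your chord-addition analysis: your own savings estimate for adding $v_0v_d$ is of order $(d-1)\ell - \Theta(d^2)$ (the $2(d-1)\lfloor\ell/2\rfloor$ term dominates). Combining this with the only generic lower bound available, $\sigma(d)\geq \tfrac{d}{2}\sigma(2)$ from Proposition~\ref{prop:basic_property_sigma}, the PS condition $\tfrac12\sigma(d)\geq(d-1)\ell-\Theta(d^2)$ is tightest not at $d=2,3$ but at $d\approx\sqrt{\ell}$, where it forces $\ell\lesssim\sigma(2)/4$. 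Hence the diameter $\lfloor\ell/2\rfloor$ is at most about $\sigma(2)/8$, contradicting your assertion that the constant can be tuned to keep the diameter above $\sigma(2)/4$. (Concretely, for linear $\sigma(x)=cx$ and $\ell=\sigma(2)/2=c$, adding $v_0v_4$ is strictly profitable once $c$ is large.) A secondary gap: rotational symmetry reduces to $v_0v_d$ only among spine--spine chords; the spine--helper and helper--helper additions are genuinely different cases and must be treated separately.

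The paper avoids the first issue by using a \emph{spider} of triangulated legs rather than a cycle. The point is that in a cycle a single chord shortcuts roughly half the graph, so the savings scale with $\ell$; in a spider a chord between two vertices on the same leg (or on two different legs) only shortens distances to vertices further out on those one or two legs, so the savings scale with the leg length $L=\lceil\sigma(2)/8\rceil$ rather than with $n$. This decoupling is exactly what lets the paper push the diameter up to $2L\geq\sigma(2)/4$ while keeping every chord unprofitable. If you only need the asymptotic conclusions (which is all Theorem~\ref{thm:LB_PoA} uses), your cycle works with a smaller constant; to match the lemma as stated you need the spider.
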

\begin{proof}
Let $k \geq 2$ be an integer and $n=2\left\lceil \frac{\sigma(2)}{8} \right\rceil k+1$. The \PS network $G$ of $n$ nodes is obtained from a spider graph $\mathcal{S}$ with center $x$, with $\left\lceil \frac{\sigma(2)}{8} \right\rceil k+1$ nodes, and having $k$ legs of length $\left\lceil \frac{\sigma(2)}{8} \right\rceil$ each, where we add a new node for each edge of $\mathcal{S}$ that is connected to both endpoints of the respective edge (see Figure~\ref{fig:Lower_bound_PoA} for an example). 
\begin{figure}[h]
\centering
\includegraphics[width=11cm]{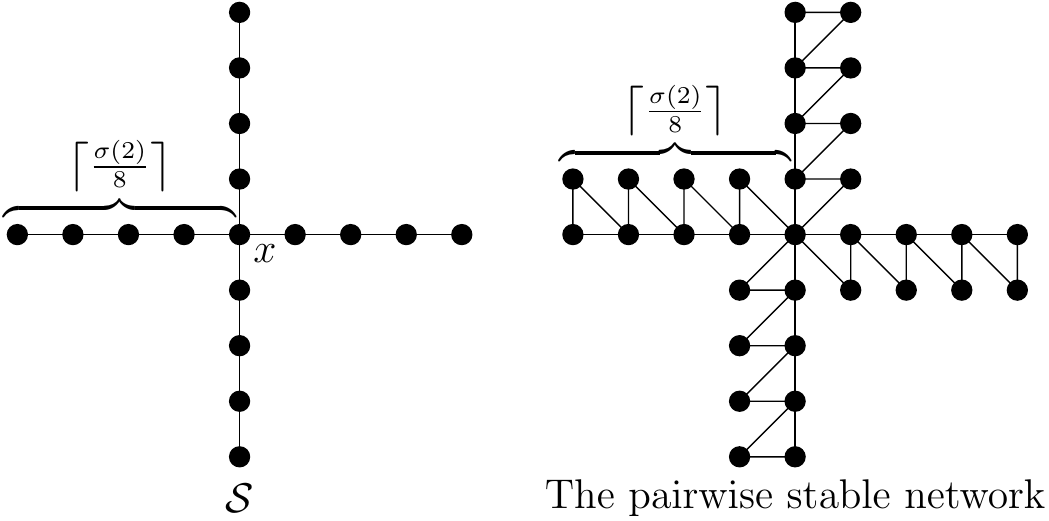}
\caption{Illustration of the construction in Lemma~\ref{lemma:pairwise_stable_network_large_diameter} for $k=4$ and $25 \leq \sigma(2) \leq 32$. Left: the spider $\mathcal{S}$ with $k$ legs of length $\left\lceil \frac{\sigma(2)}{8} \right\rceil=4$ each. Right: the \PS network obtained by augmenting $\mathcal{S}$ with $\left\lceil \frac{\sigma(2)}{8} \right\rceil k$ new nodes.}
\label{fig:Lower_bound_PoA}
\end{figure}
The distance cost of $G$ is at least $n$ times the distance cost of the center $x$, i.e., 
\[n\sum_{i=1}^{\left\lceil \frac{\sigma(2)}{8} \right\rceil}(2ki)\geq kn \left\lceil \frac{\sigma(2)}{8} \right\rceil^2=\Omega\big(\sigma(2)n^2\big).\] 
Hence, the social cost of $G$ is $\Omega\big(\sigma(2)n^2\big)$. Finally, the diameter of $G$ is equal to $2\left\lceil \frac{\sigma(2)}{8} \right\rceil \geq \frac{\sigma(2)}{4}$. 

We now prove that $G$ is \PS. First of all, no agent wants to delete an edge of $G$ as, by doing so, the agent would save $\sigma(2)$ by the removal of the edge, but the cost of another edge bought by the same agent would increase from $\sigma(2)$ to $\sigma(n)$. Since $n \geq 5$, we have that $\sigma(n) > 2\sigma(2)$, by Proposition~\ref{prop:basic_property_sigma}.

Now we show that no two agents $u$ and $v$ want to add the edge $uv$. Let $i$ and $j$ be the distances from $u$ and $v$ to the center $x$ of the spider, respectively. W.l.o.g., we assume that $j \geq i$. We divide the proof into two cases according to whether $u$ and $v$ are in the same leg of the spider or not.

If $u$ and $v$ are in the same leg of the spider,\footnote{We assume that this is the case if $i=0$.} then, by adding $uv$ the edge cost of $u$ would increase by $\frac{1}{2}\sigma(j-i)$, while the distance cost of $u$ would decrease by at most 
\begin{align*}
& 2\left(\left\lceil\frac{\sigma(2)}{8}\right\rceil-j\right)(j-i-1)+2\sum_{\ell=1}^{\left\lfloor\frac{j-i-1}{2}\right\rfloor}\ell\\
& \leq 2\left(\frac{\sigma(2)}{8}+1-j\right)(j-i)+\frac{1}{4}(j-i)^2\\
& \leq \frac{\sigma(2)}{4}(j-i)\leq \frac{1}{2}\sigma(j-i),
\end{align*}
where the last inequality holds by Proposition~\ref{prop:basic_property_sigma}, while the last but one inequality holds because $j \geq 2$. Hence, $u$ would not agree to adding the edge $uv$.

If $u$ and $v$ are in different legs of the spider, then, by adding $uv$ the edge cost of $u$ would increase by $\frac{1}{2}\sigma(i+j)$, while the distance cost of $u$ would decrease by at most
\[
\mu:=2\left(\left\lceil\frac{\sigma(2)}{8}\right\rceil-j\right)(j+i-1)+2\sum_{\ell=1}^{\left\lfloor\frac{j+i-1}{2}\right\rfloor}\ell.
\]
When $i,j=1$, the value $\mu$ is upper bounded by $\frac{\sigma(2)}{4}$ and therefore it is not convenient for $u$ to add the edge $uv$. When $j \geq 2$, i.e., $i+j\geq 3$, the value $\mu$ is upper bounded by ,
\begin{align*}
\mu & \leq 2\left(\frac{\sigma(2)}{8}+1-j\right)(j+i-1)+\frac{1}{2}(j+i-1)^2\\
		& \leq \frac{\sigma(2)}{4}(j+i) \leq \frac{1}{2}\sigma(i+j),
\end{align*}
where the last inequality holds by Proposition~\ref{prop:basic_property_sigma}. Hence, $u$ would not agree on adding  $uv$.
The claim follows.
\end{proof}

\noindent The \PS networks of Lemma~\ref{lemma:pairwise_stable_network_large_diameter}, depicted in Figure~\ref{fig:Lower_bound_PoA}, asymptotically reach the upper bound for the diameter of \PS networks.
Moreover, they allow us to prove asymptotically matching lower bounds to the PoA for the class of 2-edge-connected networks.
\begin{theorem}\label{thm:LB_PoA}
The PoA of SNCG is in $\Omega\left(\frac{\sigma(n)}{n\max\{\sigma(2),n\}}\right)$. For the class of 2-edge-connected networks the PoA is in $\Omega\big(\min\{\sigma(2),n\}\big)$.
\end{theorem}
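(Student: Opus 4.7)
The plan is to establish the two lower bounds by exhibiting explicit \PS witnesses and then comparing their social cost to the optimum identified in Theorem \ref{thm:fan_and_clique_are_OPT}, which evaluates to $\Theta(n \max\{n, \sigma(2)\})$. Throughout, the case $\sigma(2) < 2$ can be dismissed at once: Theorem \ref{thm:fan_and_clique_are_PS} then forces the clique to be both the unique \PS network and the social optimum, so $\mathrm{PoA}=1$ and both stated lower bounds hold vacuously (they are $O(1)$ in that regime). Hence I focus on $\sigma(2) \geq 2$.

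For the 2-edge-connected bound, the witness is already furnished by Lemma \ref{lemma:pairwise_stable_network_large_diameter}: a 2-edge-connected \PS network on $n$ nodes of social cost $\Omega(\sigma(2)\, n^2)$, available for infinitely many values of $n$. Dividing by the optimum cost $\Theta(n \max\{n, \sigma(2)\})$ yields a ratio of order $\sigma(2)\, n / \max\{n, \sigma(2)\}$, which equals $\Omega(\sigma(2))$ when $\sigma(2) \leq n$ and $\Omega(n)$ when $\sigma(2) > n$; both cases collapse to $\Omega(\min\{\sigma(2), n\})$, as required.

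For the general bound, my natural witness is the modified fan $\mathbf{F}'_n$ for even $n$, which by Theorem \ref{thm:fan_and_clique_are_PS} is \PS whenever $\sigma(2) \geq 2$ and which, by construction, contains a bridge connecting the extra leaf to the center. Since that bridge has cost exactly $\sigma(n)$ by definition of $c_G$, and since the edge contribution to the social cost equals $\sum_{uv \in E} c_G(uv)$ (the $\tfrac{1}{2}$ factor in each agent's cost cancels after summing both endpoints), the social cost of $\mathbf{F}'_n$ is at least $\sigma(n)$. Dividing by the optimum $\Theta(n \max\{n, \sigma(2)\})$ immediately gives $\Omega(\sigma(n)/(n \max\{\sigma(2), n\}))$.

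The main obstacle, and essentially the only place where care is needed, is to recognize that the single bridge in $\mathbf{F}'_n$ already suffices to witness the full $\sigma(n)$ contribution without any offsetting savings elsewhere in the network. This is immediate because all edge costs and pairwise distances appearing in the social cost are nonnegative, so one can simply read off $\sigma(n)$ as a lower bound on $cost(\mathbf{F}'_n)$ and skip any further bookkeeping over the fan edges and leaf distances. Once this observation is in place, both lower bounds reduce to the one-line ratio calculations above.
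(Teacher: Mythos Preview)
Your proposal is correct and follows essentially the same approach as the paper: it uses Lemma~\ref{lemma:pairwise_stable_network_large_diameter} for the 2-edge-connected lower bound and the modified fan $\mathbf{F}'_n$ with its bridge of cost $\sigma(n)$ for the general lower bound, in both cases comparing against the optimum cost $\Theta(n\max\{\sigma(2),n\})$ from Theorem~\ref{thm:fan_and_clique_are_OPT}. Your explicit observation that a single $\sigma(n)$ term already lower-bounds $cost(\mathbf{F}'_n)$ is, if anything, slightly cleaner than the paper's wording.
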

\begin{proof}
First we prove the lower bound for the class of 2-edge-connected networks.
By Lemma~\ref{lemma:pairwise_stable_network_large_diameter}, there is a \PS network with $n=\Omega(\sigma(2))$ nodes and social cost in $\Omega\big(\sigma(2)n^2\big)$. Now, if we assume that $\sigma(2) >2$, then, by Theorem~\ref{thm:fan_and_clique_are_OPT}, we have that $\mathbf{F}_n$ is a social optimum. The social cost of $\mathbf{F}_n$ is at most 
\[2n(n-1)+(\sigma(2)-2)\frac{3}{2}(n-1)=O\big(n^2+\sigma(2)n\big).\]
Therefore, the PoA is in $\Omega\big(\min\{\sigma(2),n\}\big)$.

Concerning the lower bound for the general case, consider the modified fan graph $\mathbf{F}_n'$ from Theorem~\ref{thm:fan_and_clique_are_PS} for even $n$. This network is \PS for $\sigma(2) \geq 2$ and has a social cost of $O\big(\sigma(2)n+\sigma(n)\big)$. Thus, the PoA is in $\Omega\left(\frac{\sigma(n)}{n\max\{\sigma(2),n\}}\right)$.
\end{proof}

\noindent We conclude this section by showing bounds to the PoS.

\begin{theorem}\label{thm:PoS_PSN}
The PoS of the SNCG when $\sigma(2) \leq 2$ or $n$ is odd is 1. The PoS of the SNCG when $\sigma(2)>2$ and $n$ is even:
\begin{itemize}
\item[•] at most $\frac{11}{8}$ if $\sigma(3)\geq 6$ and $\sigma(2)\leq\frac{n}{2}-4$;
\item[•] at most $\frac{17}{12}$ if $\sigma(2)\geq \frac{2n}{3}$; 
\item[•] $\mathcal{O}\left(\frac{\sigma(n)}{n\max\big\{\sigma(2),n\big\}}\right)$, otherwise.
\end{itemize} 
\end{theorem}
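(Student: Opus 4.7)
I first dispose of the easy cases. If $\sigma(2) \leq 2$, then by Theorem~\ref{thm:fan_and_clique_are_OPT} the clique $\mathbf{K}_n$ is a social optimum and by Theorem~\ref{thm:fan_and_clique_are_PS} it is also pairwise stable. If $\sigma(2)>2$ and $n$ is odd, Theorem~\ref{thm:fan_and_clique_are_OPT} says $\mathbf{F}_n$ is the unique social optimum, while Theorem~\ref{thm:fan_and_clique_are_PS} states $\mathbf{F}_n'=\mathbf{F}_n$ is pairwise stable for odd $n$. In both sub-cases the social optimum is itself PS, so PoS $=1$.

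\noindent\textbf{Part 2 ($\sigma(2)>2$, $n$ even).} Here $\mathbf{F}_n$ is the unique social optimum (Theorem~\ref{thm:fan_and_clique_are_OPT}) but is \emph{not} pairwise stable: the extra leaf $v$ is connected to $c$ and to some matched leaf $\ell$, making $v\ell$ a $2$-edge whose unilateral removal by $\ell$ saves $\sigma(2)/2>1$ in edge cost while losing only $1$ in distance cost and creating no bridge on $\ell$'s side. Hence the PoS may be strictly greater than $1$, and I need to exhibit PS networks of small cost. For the general (big-$O$) bound I would use the modified fan $\mathbf{F}_n'$, which is PS by Theorem~\ref{thm:fan_and_clique_are_PS}. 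It differs from $\mathbf{F}_n$ only by replacing the two $2$-edges attaching $v$ by the single bridge-edge $cv$ of cost $\sigma(n)$, so a direct computation yields
\begin{equation*}
\mathrm{cost}(\mathbf{F}_n') - \mathrm{cost}(\mathbf{F}_n) \;=\; \sigma(n) - 2\sigma(2) + O(1).
\end{equation*}
Since $\mathrm{cost}(\mathbf{F}_n) = \Theta(n\max\{\sigma(2),n\})$, the PoS is $1 + O\!\left(\sigma(n)/(n\max\{\sigma(2),n\})\right)$, matching the claim.

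For the constant bounds $11/8$ and $17/12$ I would construct explicit PS networks that avoid the expensive bridge of $\mathbf{F}_n'$. The idea is to replace the single pendant-node $v$ of $\mathbf{F}_n'$ by a small sub-gadget, for instance a triangle involving $v$ and two leaves that are already matched in $\mathbf{F}_{n-1}$, whose combinatorial structure guarantees that deleting any incident edge creates a bridge elsewhere, mimicking the matching mechanism already present in $\mathbf{F}_{n-1}$. The side-conditions are invoked to block profitable edge additions:
\begin{itemize}
\item For the $11/8$ bound, $\sigma(3)\geq 6$ forbids creating a $3$-edge (an agent saves at most $3$ units of distance but pays $\sigma(3)/2\geq 3$), and $\sigma(2)\leq n/2-4$ combined with Proposition~\ref{prop:diam_PSN} rules out the remaining long-range shortcuts.
\item For the $17/12$ bound, $\sigma(2)\geq 2n/3$ makes each $2$-edge expensive enough that a slightly sparser PS construction lies within the claimed factor of $\mathrm{cost}(\mathbf{F}_n)$.
\end{itemize}
In each case a straightforward edge- and distance-cost accounting then delivers the stated ratio against $\mathrm{cost}(\mathbf{F}_n)=2n(n-1)+(\sigma(2)-2)(3n/2-1)$.

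\noindent\textbf{Main obstacle.} The hard part is designing and verifying the two special PS constructions. Simple attempts, such as attaching $v$ via a triangle $v,\ell_1,\ell_2$ with $\ell_1,\ell_2$ already adjacent to $c$, fail pairwise stability: the new edges $v\ell_1, v\ell_2$ are $2$-edges whose removal by $v$ does not create a bridge anywhere, so $v$ profitably deletes one of them and saves $\sigma(2)/2-1>0$. The correct gadget must ensure that every one of its edges lies on a \emph{matching-pair} whose removal forces the companion edge to become a bridge, so that the resulting $(\sigma(n)-\sigma(2))/2$ cost spike on the companion dominates the $\sigma(2)/2-1$ saving. Establishing this pairwise-stability property for the exact gadget, combined with the addition-move analysis relying on the given side-conditions, constitutes the technically delicate core of the proof; the comparison with $\mathrm{cost}(\mathbf{F}_n)$ is then a mechanical calculation.
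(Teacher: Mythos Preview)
Your Part~1 and the general $O\!\left(\sigma(n)/(n\max\{\sigma(2),n\})\right)$ bound via $\mathbf{F}_n'$ are correct and essentially identical to the paper.

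For the two constant bounds, however, your approach diverges from the paper and the obstacle you identify is real. The paper does \emph{not} try to repair the fan locally; it abandons the fan structure entirely. For the $11/8$ bound it takes a clique $\mathbf{K}_4$ and attaches $(n-4)/8$ edge-disjoint triangles to each of the four clique vertices; for the $17/12$ bound it takes a $6$-cycle built from six edge-disjoint triangles and hangs $(n-12)/12$ triangles off every second inner-cycle vertex. In both constructions every non-clique/non-cycle edge sits in a unique triangle, so its removal turns the companion edge into a bridge of cost $\sigma(n)$---this is exactly the ``matching-pair'' mechanism you were looking for, but achieved globally rather than by a local patch on $\mathbf{F}_n$. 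The side-conditions then play a different role from what you guessed: $\sigma(2)\le n/2-4$ is needed so that a $\mathbf{K}_4$-vertex does not profitably delete a clique edge (deletion saves $\sigma(2)/2$ but increases distance to the $\approx n/4$ nodes hanging off the other endpoint), while $\sigma(3)\ge 6$ blocks $3$-edge additions between triangles; $\sigma(2)\ge 2n/3$ is used to block all edge additions in the $6$-cycle construction. The specific ratios $11/8$ and $17/12$ then come from the larger distance cost of these diameter-$3$ and diameter-$5$ graphs (roughly $\tfrac{11}{4}n^2$ and $\tfrac{17}{6}n^2$) against the $2n^2$ of $\mathbf{F}_n$.

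Your local-gadget plan, by contrast, aims for a network whose cost differs from $\mathbf{F}_n$ by $O(\sigma(2))$, which---if it worked---would give PoS $=1+o(1)$, not $11/8$ or $17/12$. But the failure you noticed is structural: any leaf $\ell$ that acquires an extra $2$-edge while its original two edges (to the center and to its match-partner) remain $2$-edges will always profit from deleting the extra edge. A purely local patch cannot avoid this without either giving $\ell$ a companion edge that becomes a bridge (which reintroduces a $\sigma(n)$ term) or rerouting many shortest paths through $\ell$ (which requires changing the graph globally). The paper sidesteps the issue precisely by going global.
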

\begin{proof}
For the cases in which $\sigma(2)\leq 2$ or $n$ is odd we have that the PoS is 1, since, from Theorem~\ref{thm:fan_and_clique_are_OPT} and Theorem~\ref{thm:fan_and_clique_are_PS}, there always exists a \PS network which is also a social optimum. 

It remains to prove the theorem statement for the case where $n$ is even and $\sigma(2)> 2$. We observe that the fan graph $\mathbf{F}_n$ is not \PS in this case as the node of degree 3 can remove the edge towards a node of degree 2. 
An upper bound for the PoA is delivered by the modified fan graph $\mathbf{F'}_n$ from the proof of Theorem~\ref{thm:fan_and_clique_are_PS}.
Then the PoS ratio is at most 
\begin{align*}
\frac{cost(\mathbf{F'}_n)}{cost(\mathbf{F}_n)} = \frac{(n-2)\frac{3}{2}\sigma(2)+\sigma(n) + 2n^2-5n+4}{(n-2)\frac{3}{2}\sigma(2)+2\sigma(2) + 2n^2-5n+2}
\in\mathcal{O}\left(\frac{\sigma(n)}{n\max\big\{\sigma(2),n\big\}}\right). 
\end{align*}
Unfortunately, the above upper bound is infinitely large for sufficiently large cost of a bridge $\sigma(n)$. 
However, under some additional assupmtions it is possible to show that the PoS is constant. 

We will show that for  $\sigma(3)\geq 6$ and $\sigma(2)\leq n/2$, there are \PS networks with no bridges, small diameter and containing 2-edges only.

\begin{figure}[h]
\centering
\includegraphics[width=12cm]{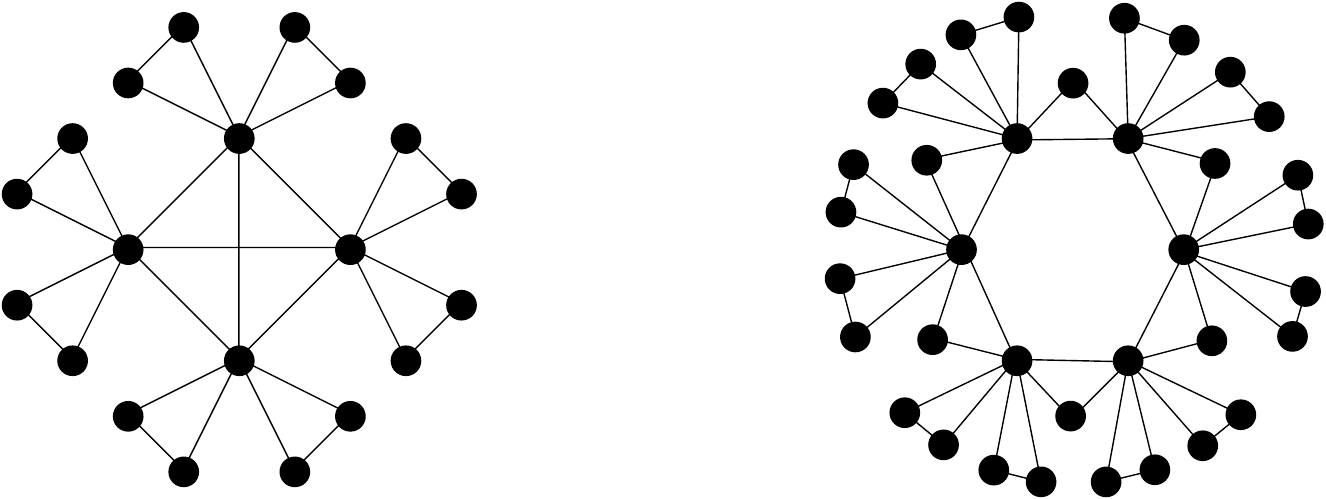}
\caption{Two instances of the \PS networks from the proof of Theorem~\ref{thm:PoS_PSN}. The left figure shows an example of \PS with 20 nodes for the case $\sigma(2)\leq \frac{n}{2}$ and $\sigma(3) \geq 6$. On the right side is an example with 36 nodes for the case $\sigma(2)\geq \frac{2n}{3}$.}
\label{fig:UB_PoS}
\end{figure} 

Consider a clique $\mathbf{K}_4$. Connect each node of the clique with $\frac{n-4}{8}$ edge-disjoint triangles (if $n\mod 8\neq 0$, we uniformly distribute triangles among nodes of the clique such that the number of incident triangles differs by at most 1 among any two nodes of the clique). 
See Figure~\ref{fig:UB_PoS}~(left) for an illustration of the construction.

We will now prove that this network is \PS. 
Consider an agent in the clique. 
She cannot deviate from her current strategy because deleting any of her incident edges from the clique increases her distance to at least $1+\frac{n-4}{4}-2$ nodes by 1 and decreases her edge cost by $\frac{\sigma(2)}{2}\leq \frac{n}{4}-2$. The addition of any edge to a node of a triangle would cost $\frac{\sigma(2)}{2}$ and would improve her distance by only 1. 
No agent located at a node of a triangle can delete any of her incident edges because it creates a high-cost bridge. 
The addition of an edge with a clique is also not profitable for the agent from the clique as we observed earlier. 
Also, the addition of an edge to a node from another triangle either costs $\sigma(2)/2$ and improves the distance cost by only 1, or it costs $\sigma(3)/2$ and improves the distance cost by 3. 
Hence, we conclude that the construction is \PS. 

The social cost of the construction asymptotically approaches the value $\frac{3}{2}\cdot(n-4)\sigma(2)+6\sigma(2) + \frac{11}{4}n^2$, while the social cost of a fan graph is $(n-2)\frac{3}{2}\sigma(2)+2\sigma(2) + 2n^2-5n+2$, i.e., the PoS goes to $\frac{11}{8}$ when $n$ goes to infinity.

Now we consider the case when $\sigma(2)> \frac{2n}{3}$. 
We construct the following \PS network $G$. 
Consider a cycle  of length 6 consisting of edge-disjoint triangles.
Each second node of the inner cycle is connected with $\frac{n-12}{12}$ edge-disjoint triangles (if $n\mod 12\neq 0$, we uniformly distribute the triangles among nodes of the cycle such that the number of triangles differs by at most 1). 
See Figure~\ref{fig:UB_PoS}~(right) for an illustration of the construction. 

We will show that $G$ is \PS for $\sigma(2)\geq 2n/3$. 
We start with the edge deletions. 
Clearly no agent can delete any edge from a triangle outside of the central 6-cycle because it creates a bridge of cost $\sigma(n)$. 
Also, an agent located in the central 6-cycle  cannot delete any of her edges since it either creates a bridge or increases her cost of another incident edge from $\sigma(2)$ to $\sigma(5)$. 
In the second case, the deletion of the edge decreases the agent's edge cost by $\sigma(2)$ and at the same time the edge cost is increased by $\sigma(5)/2 \geq \sigma(4)/2\geq \sigma(2)$ (by Proposition~\ref{prop:basic_property_sigma} and monotonicity of $\sigma$-function) since the edge cost of another edge is increased by this deletion. 

Now we will show that no pair of non-neighboring agents $u, v$ wants to add a new edge $uv$. 
Note, there are  two types of nodes in the central 6-cycle: nodes of degree 2 (outer cycle), and nodes of degree $4+\frac{n}{6}-2$ (inner cycle). 
Let $u$ be a node in the cycle with degree $\left(2+\frac{n}{6}\right)$. 
If $v$ is a similar node of the same degree, then one of two cases holds:
\begin{itemize}
\item[•] node $v$ is at distance 2 from $u$. Then $u$'s cost changes by $\frac{1}{2}\sigma(2) - 2\cdot\left(\frac{n}{6}-2\right) - 3$, since it improves distance by 1 to two nodes from the cycle and to two high-degree nodes. This changes agent $u$'s cost by $\frac{1}{2}\sigma(2) - \frac{n}{3} +1 > 0$, since $\sigma(2)\geq 2n/3$. Thus, the edge addition is not profitable for agent $u$.
\item[•] node $v$ is at distance 3 from $u$. Then $u$'s cost changes by $\frac{1}{2}\sigma(3) - 2\cdot\left(\frac{n}{6}-2\right) - 4$, since $u$ decreases her distance to only three nodes in the cycle and only one high-degree node by 2. The change of $u$' cost is $\frac{1}{2}\sigma(3) - \frac{n}{3} \geq 0$, because $\sigma(3)\geq \sigma(2)\geq 2n/3$. Hence the move is not profitable for $u$.
\end{itemize}
If $v$ is any other node in the network, i.e., any 2-degree node, then creating the edge $uv$ would cost at least $\sigma(2)/2$ and decrease the distance to at most $\frac{n}{6}$ nodes. 
Hence it is enough to have $\sigma(2)\geq \frac{n}{3}$ to prevent $u$ from creating any edge with a 2-degree node. 
Therefore, there is no improving edge addition between a $\left(2+\frac{n}{6}\right)$-degree node and any other node in~$G$.

Now we consider 2-degree nodes. 
Let $u$ be a 2-degree node from the central cycle. 
As we know from the previous case, $v$ cannot be $\left(2+\frac{n}{6}\right)$-degree node. 
If $v$ is another 2-degree node in the central cycle, the addition of the edge $(u,v)$ significantly decreases the distance to other nodes only if $v$ is at distance 3 or 4 from $u$. 
In the first case, $u$ pays $\frac{1}{2}\sigma(3)$ and improves her distances by at most $4+\frac{n}{6}-2$. 
Since $\frac{n}{6}+2<\frac{n}{3}<\frac{1}{2}\sigma(2)\leq \frac{1}{2}\sigma(3)$, this move is not profitable for $u$. 
In the second case, edge $(u,v)$ costs $\frac{1}{2}\sigma(4)$ and improves her distance cost by at most $2\cdot\left(\frac{n}{6}-2\right)+5$. Since   $\frac{1}{2}\sigma(4)\geq \sigma(2)\geq \frac{2n}{3} > \frac{n}{3}+1$, this move is not profitable.

Analogously, if $v$ is a 2-degree node outside of the central cycle, then the maximum profit $u$ achieves if $v$ is at distance 4 from $u$. 
This move costs $\frac{1}{2}\sigma(4)$ for agent $u$ and decreases her distance to $\frac{n}{6}-3$ nodes by 1 and to three other nodes by 5 in total. 
Since  $\frac{1}{2}\sigma(4)\geq \sigma(2) > \frac{n}{6} - 2$, the edge will not be added. 

Finally, let $u$ be a 2-degree node that is not in the central cycle, i.e., is a 2-degree node in a vane.  
Clearly, creating an edge with any other 2-degree node connected to the same center is not profitable for $u$. 
Consider a 2-degree node $v$ in a vane at distance at least 3 from $u$. 
If the distance between $u$ and $v$ is 3, the addition of $uv$ improves the distance to at most two nodes but costs $\frac{1}{2}\sigma(3)$, and therefore is not profitable. 
If $v$ is at distance 4 from $u$, the addition of the edge $uv$ costs $\frac{1}{2}\sigma(4)\geq \sigma(2)\geq \frac{2n}{3}$ and improves the distance cost by at most $\frac{n}{3}+2 < \frac{2n}{3}$. 
If $v$ is at distance 5 from $u$, the addition of $uv$ costs $\frac{1}{2}\sigma(5) >  \sigma(2)\geq \frac{2n}{3}$ and improves the distance cost by at most $2\cdot\left(\frac{n}{6}-4\right) + 11 < \frac{2n}{3}$. 
Since we checked all possible agents' improving moves, $G$ is \PS. 

Network $G$ provides the following upper bound to the PoS
\[
\frac{cost(G)}{cost(\mathbf{F}_n)} = \frac{\frac{3\cdot 6}{2}\left(\frac{n}{6}-2\right)\sigma(2)+6\sigma(2)+\frac{17}{6}n^2-13n-258}{(n-2)\frac{3}{2}\sigma(2)+2\sigma(2) + 2n^2-5n+2}\leq \frac{17}{12}.
\] 
\end{proof}

\section{Dynamics of the SNCG}\label{sec:dynamic}
So far we have considered the SNCG as a one-shot-game, i.e., we only have specified the strategy space of the agents and then focused on analyzing the equilibria of the game. In this section we focus on a more constructive sequential view of the game. As our goal is to mimick real-world social networks, we want to study the process of how such networks evolve over time. For this, we consider some initial network and then we activate the agents sequentially. An active agent will try to decrease her current cost by adding (jointly with another agent) or deleting an edge in the current network. If this process converges to a state where no agent wants to add or delete edges, then a \PS network is found. Hence, such so-called improving move dynamics are a way for actually finding equilibrium states of a game. Such dynamics are guaranteed to converge if and only if the strategic game has the \emph{finite improvement property (FIP)}, i.e., if from any strategy vector any sequence of improving strategy changes must be finite. This is equivalent to the game being a potential game~\cite{MS96}. We start with the negative result that the convergence of improving move dynamics is not guaranteed for the SNCG.

\begin{theorem}
The SNCG does not have the FIP.
\end{theorem}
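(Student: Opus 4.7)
The plan is to exhibit a concrete instance $(\sigma, n, G_0)$ together with a finite cyclic sequence of improving moves $G_0 \to G_1 \to \dots \to G_k = G_0$. Since an infinite improving-move sequence is obtained by repeating the cycle, this refutes the FIP. The move types available are (i) a single edge deletion by one endpoint, strictly decreasing that agent's cost, or (ii) a single edge addition with the consent of both endpoints, strictly decreasing the cost of both. To make the bookkeeping tractable I would fix $\sigma$ to a simple convex function (e.g.\ $\sigma(x)=x$, or a slightly steeper piecewise-linear/quadratic convex function if the linear case turns out to be too tight), and work with a small number of nodes $n$ (likely between $6$ and $10$) so that all distances and edge costs can be computed by hand.

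The main design idea is to exploit the nonlocality of the cost function: in the SNCG the cost of an edge $uv$ depends on $d_{G-uv}(u,v)$, so removing or adding a \emph{different} edge elsewhere can make a previously expensive edge cheap (or vice versa) and thereby swap the incentive structure. Concretely, I would look for a base graph in which one edge $e_1$ is expensive because its alternative path is long, while a second edge $e_2$ is borderline profitable. Deleting $e_1$ shortens the detour for $e_2$ (or, symmetrically, makes $e_2$ a bridge-like structure) and flips the incentives, so that adding some other edge $e_3$ becomes profitable; this in turn makes $e_1$ cheap to re-buy, and so on, until we arrive back at $G_0$. A natural candidate is a small graph built around a long cycle with a few chords, where rotating the chord position around the cycle yields a four- or six-step cycle of unilateral deletions interleaved with bilateral additions.

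The key steps I would carry out are: (1) pick the instance $(\sigma, n, G_0)$; (2) list the sequence of moves and the resulting graphs $G_1, \ldots, G_k=G_0$; (3) for each transition, tabulate the change in the mover's edge cost (including the side-effect on costs of other incident edges, via Proposition~\ref{prop:edge_price_in_PSN} and the formula $c_G(uv)=\sigma(d_{G-uv}(u,v))$) and the change in the mover's distance cost; (4) verify that in every step the total change is strictly negative for the deviating agent(s). I would lay this out as a small table or a sequence of labelled figures, so that the reader can mechanically check each line.

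The main obstacle will be step~(3)--(4): controlling the \emph{side effects} on edge costs when an edge is inserted or removed. Because distances elsewhere shift, an edge that cost $\sigma(2)$ before a move may now cost $\sigma(3)$ or more, and this must be accounted for when comparing the mover's total cost before and after. In particular, to make an addition move mutually improving, I need both endpoints to strictly save cost even after these repercussions, which is the tightest constraint. I would therefore design the cycle so that the moves alternate in a symmetric way between two (or more) equivalent orbits of vertices, so that improvements for one endpoint automatically mirror those for its counterpart, and I would choose $\sigma$ convex enough that $\sigma(k)$ grows fast enough with $k$ to separate the relevant thresholds but not so fast that the proposed additions cease to be profitable.
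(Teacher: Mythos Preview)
Your approach is correct and is exactly what the paper does: exhibit a small instance together with a finite cyclic sequence of improving moves. The paper's actual cycle is simpler than you anticipate, though: it uses only $n=5$ nodes and a four-step cycle $G_0\to G_1\to G_2\to G_3\to G_0$ consisting of a deletion, two additions, and a deletion, all involving just three of the five vertices. Your first candidate $\sigma(x)=x$ is indeed ``too tight'' (it gives $\sigma(2)=2$, whereas the construction needs $\sigma(2)>2$); the paper works with any $\sigma$ satisfying $2<\sigma(2)<\tfrac{1}{2}\sigma(3)+1$ and $\sigma(3)<\tfrac{1}{2}\sigma(5)+2$, for instance $\sigma(x)=\tfrac{3}{2}x$. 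Finally, the paper's cycle has no orbit symmetry of the kind you propose to search for; the mechanism is rather that one edge toggles between being a bridge (cost $\sigma(5)$) and being a $3$-edge (cost $\sigma(3)$) depending on whether a parallel chord is present, and this swing drives the incentives.
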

\begin{proof}
\begin{figure}[h]
\centering
\includegraphics[width=13cm]{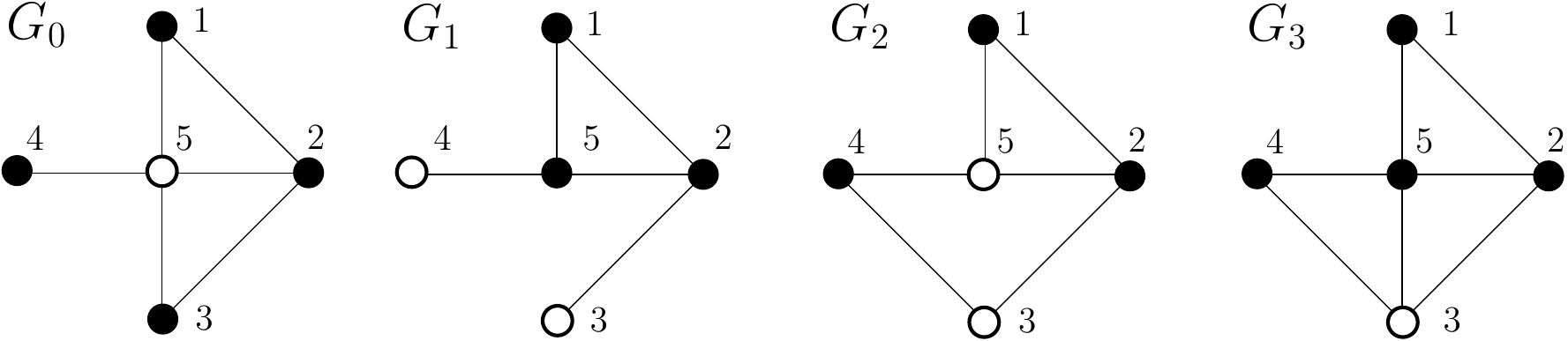}
\caption{A cyclic sequence of improving moves. The active player or pairs of players in each step are highlighted.}
\label{fig:BRC_PS}
\end{figure}
Let $\sigma$ be any function such that $2 < \sigma(2) < \frac{1}{2}\sigma(3)+1$ and $\sigma(3)< \frac{1}{2}\sigma(5)+2$.\footnote{Observe that there are several functions $\sigma$ that satisfy these additional constraints (for example $\sigma(x)=\frac{3}{2}x$).}
Consider the cyclic sequence of improving moves depicted on Figure~\ref{fig:BRC_PS} in which $G_{i}$ and $G_{(i+1)\mod 4}$ differ by exactly one edge. We will show that each step of the cycle is an improving response.\footnote{A careful reader will note that each improving move is also a best response.}
\begin{itemize}
\item[] $G_0\rightarrow G_1:$ agent $5$ deletes the edge towards agent $3$ because its edge cost decreases by $\frac{1}{2}\sigma(2)>1$ while its distance cost increases by 1. 
\item[] $G_1\rightarrow G_2:$ the edge between agents 3 and 4 is created. For each of the two agents 3 and 4, the distance cost decreases by 2 while the edge cost increases by $\sigma(3)-\frac{1}{2}\sigma(5)<2$. In fact, in $G_1$ both agents pay their share for a bridge of cost $\sigma(5)$ while in $G_2$ both agents pay their share for two edges of cost $\sigma(3)$ each. 
\item[] $G_2\rightarrow G_3:$ the edge between agents 3 and 5 is created. In fact, for agent 3 the edge cost varies by $\frac{3}{2}\sigma(2)-\sigma(3) \leq 0$ while the distance cost decreases by 1. For agent 5 the edge cost varies by $2\sigma(2)-\sigma(2)-\frac{1}{2}\sigma(3)=\sigma(2)-\frac{1}{2}\sigma(3)<1$ while the distance cost decreases by 1.
\item[] $G_3\rightarrow G_0:$ agent 3 deletes the edge towards agent 4 because its edge cost decreases by $\frac{1}{2}\sigma(2)>1$ while its distance cost increases by 1.\qedhere
\end{itemize}
\end{proof}

\noindent The above negative result for the sequential version of the SNCG should not be overrated. In fact, when simulating the sequential process it almost always converges to a \PS network. We will now discuss such simulations.
\subsection{Experimental Results}
We will illustrate that starting from a sparse initial network, the sequential version of the SNCG converges to a \PS network with real-world properties, like low diameter, high clustering and a power-law degree distribution. We will measure the clustering with the \emph{average local clustering coefficient (CC)}, that is a commonly used measure in Network Science~\cite{Bar16} \footnote{The clustering coefficient is the probability that two randomly chosen neighbors of a randomly chosen node in the network are neighbors themselves. More formally, let $\deg(v)$ denote the degree of $v$ in $G$ and let $\Delta(v)$ denote the number of triangles in $G$ that contain $v$ as a node. The \textit{local clustering coefficient} $CC(v)$ of node $v$ in $G$ is the probability that two randomly selected neighbors of $v$ are neighbors, i.e., $CC(v):=\frac{2\Delta(v)}{deg(v)(deg(v)-1)}$ if $deg(v) \geq 2$, and 0 otherwise. Clearly, $0\leq CC(v)\leq 1$. The CC of a network $G$ with $n$ nodes is the average of the local clustering coefficients over all nodes $v$, i.e., $CC(G)=\frac{1}{n}\sum_{v \in V}CC(v)$.}. Power-law degree distributions will be illustrated via log-log plots and a comparison with a perfect power-law distribution.

\begin{figure*}[t]
\centering
\begin{minipage}{0.22\textwidth}
\includegraphics[scale=0.16]{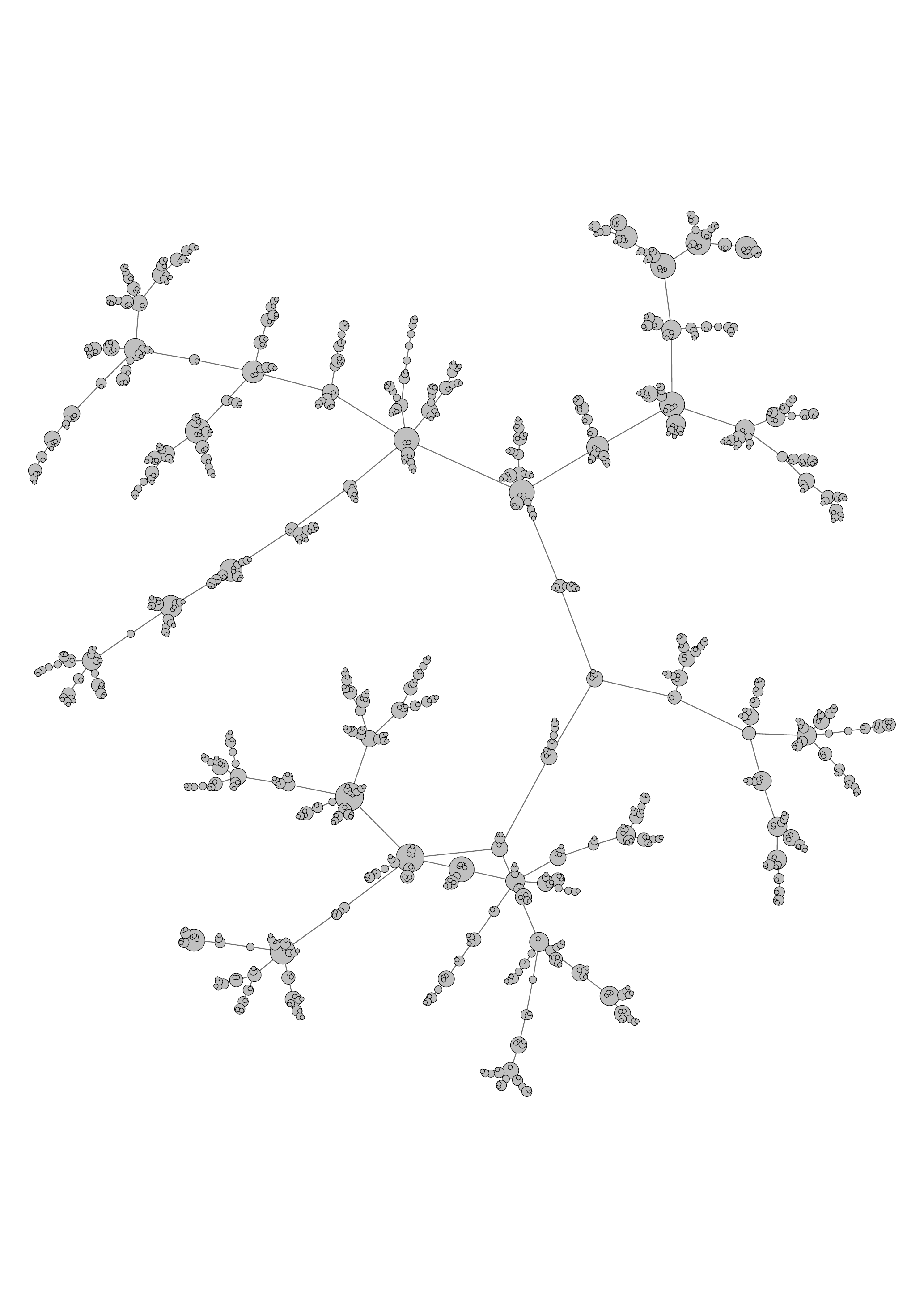}
\end{minipage}
\hfill
\begin{minipage}{0.25\textwidth}
\includegraphics[scale=0.19]{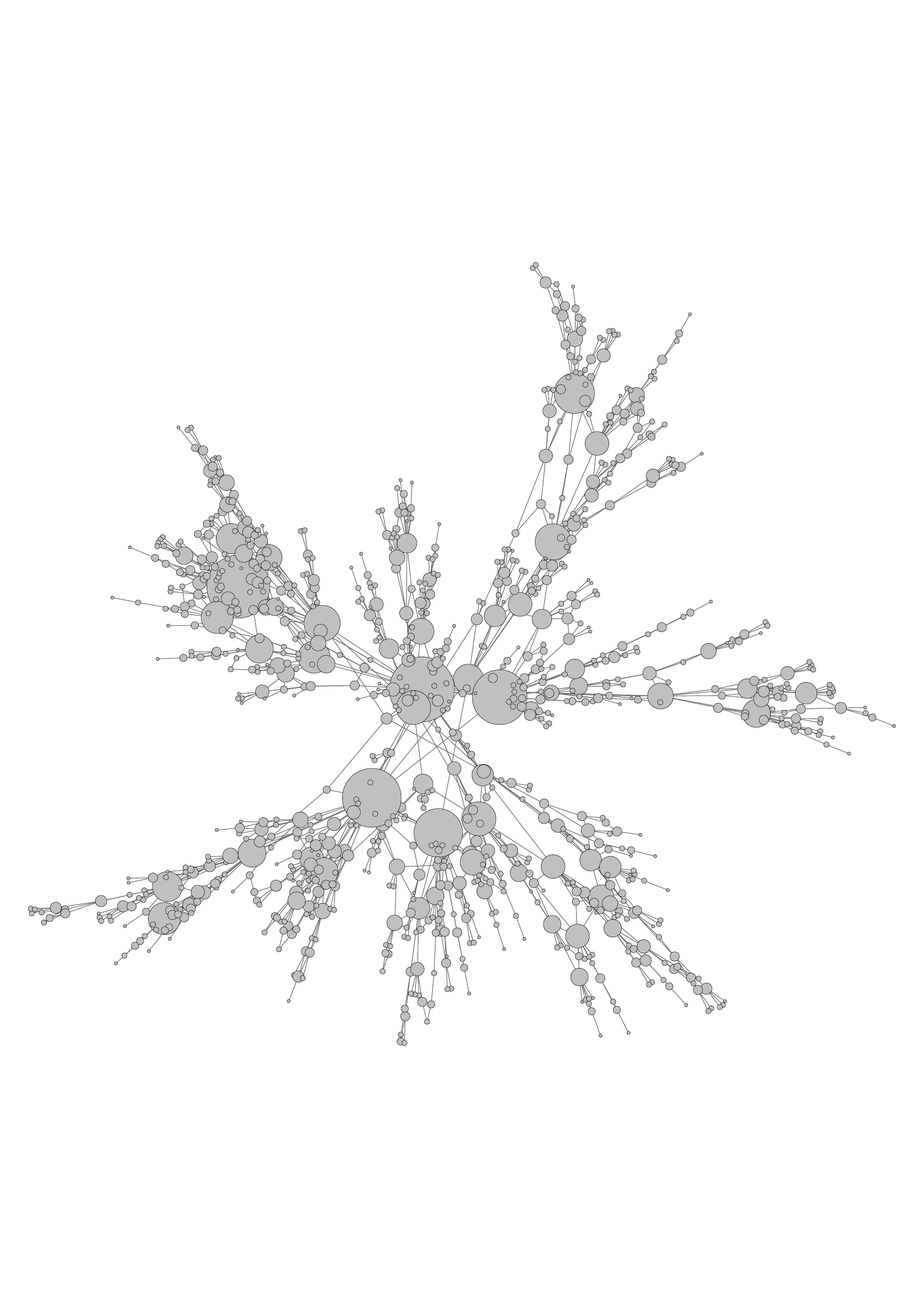}
\end{minipage}
\hfill
\begin{minipage}{0.25\textwidth}
\includegraphics[scale=0.19]{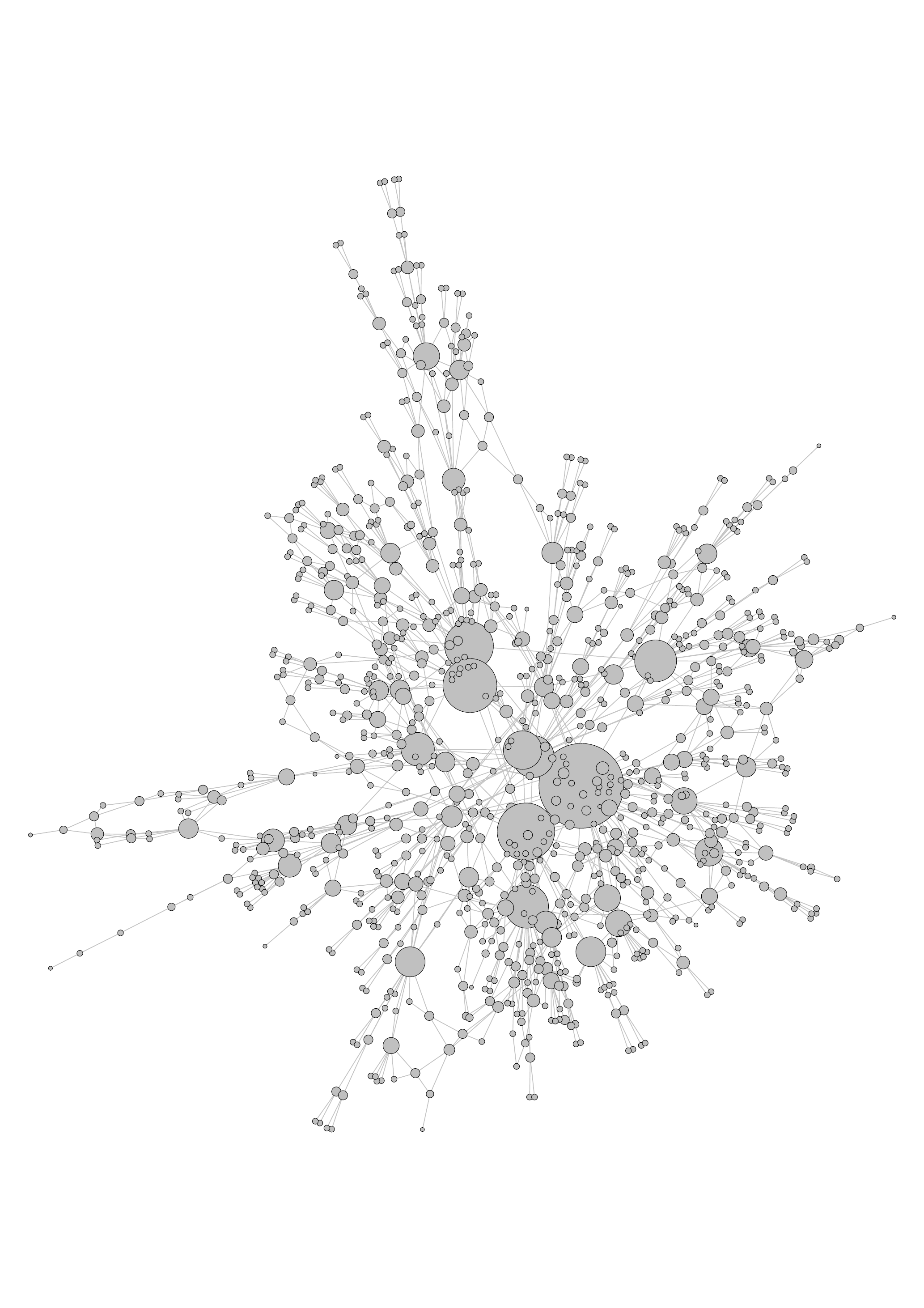}
\end{minipage}
\hfill
\begin{minipage}{0.25\textwidth}
\includegraphics[scale=0.19]{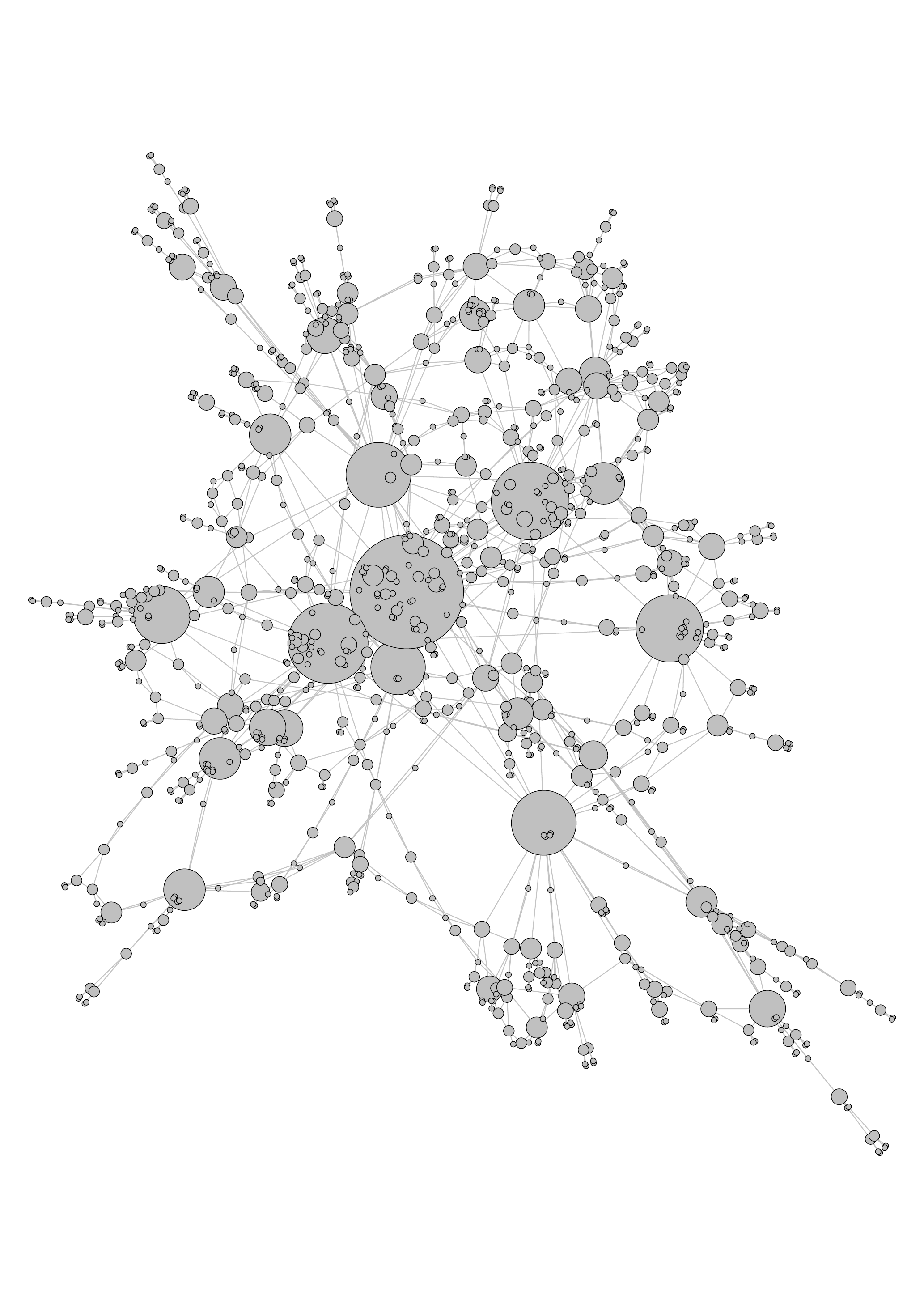}
\end{minipage}
\caption{Snapshots of networks obtained by the iterative best move dynamic starting from a random spanning tree with $n=1000$ and $\alpha=3$. Each plot from left to right shows the current network after 1000 steps each. The left plot shows the initial tree; the right plot shows is the final \PS network. The size of the nodes is proportional to the node degrees.}
\label{fig:graph_dynamic}
\end{figure*} 

For all experiments\footnote{The source code we used can be found at \url{https://github.com/melnan/distNCG.git}.} we choose $\sigma(x) = 2\log_2(n) \cdot x^\alpha$, where $n\in\mathbb{N}$ (the number of agents) and $\alpha\in \mathbb{R}$ (the exponent) are input parameters. Clearly, this function satisfies all constraints we have in the definition of the game, i.e., it is convex, monotone, and $\sigma(0)=0$. 

Note that by Proposition~\ref{prop:diam_PSN} the upper bound for the diameter of \PS networks is $\sigma(2)+2$ and thus we have to define $\sigma(2)$ to be growing with $n$ to avoid a constant diameter. Using $\sigma(x) = 2\log_2(n) \cdot x^\alpha$ as a proof-of-concept ensures a diameter upper bound of $\mathcal{O}(\log n)$ that is in line with the observed diameter bounds in many real-world networks~\cite{Bar16}. We emphasize that also other edge cost functions with similar properties yield similar results.

In each step of our simulations one agent is activated uniformly at random and this agent then performs the best possible edge addition (jointly with the other endpoint if the respective agent agrees) or edge deletion. If no such move exists then the agent is marked, otherwise the network is updated, and all marked agents become unmarked and we repeat. 
The process stops when all agents are marked. 
 
In our experiments, we always start from a sparse initial network, i.e., a cycle or a random spanning tree, to simulate an evolving social network, i.e., agents are initially connected with only very few other agents, and the number of new connections grows over time. See Figure~\ref{fig:graph_dynamic} for showcase snapshots from this process.  

 Additional experiments starting with sparse Erdös-Renyi random networks support our intuition that the network initialization does not matter as long as the networks are sparse and the average distances are large, i.e., the resulting stable structures have the same structural properties as starting from random trees or cycles. However, for example, starting from a star network yields drastically different results. Moreover, if the initial structure is a fan graph, the algorithm stops immediately since a fan is a stable network as stated in Theorem~\ref{thm:fan_and_clique_are_PS}. This shows that for the initial networks both sparseness and large average distances are crucial.

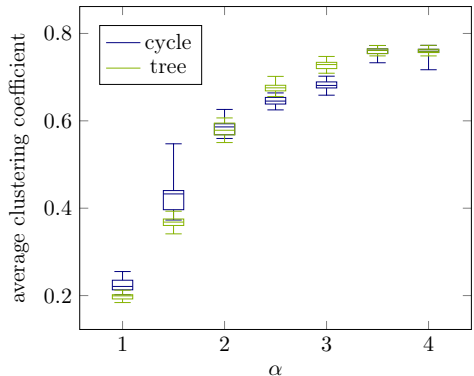
\begin{figure}[!ht]
\centering
\begin{minipage}{0.40\textwidth}
\resizebox {\textwidth} {!} {
\begin{tikzpicture}
\begin{axis}[
		legend style={
		at={(0.05,0.85)},
		anchor=west},
		xlabel= {$\alpha$},
		ylabel= {average clustering coefficient},
		boxplot/draw direction=y,
		baseline
	]
\addplot[blue!50!black, domain=1.1:1.2]{0.8};
\addlegendentry{cycle};
\addplot[lime!70!black, domain=1.1:1.2]{0.8};
\addlegendentry{tree};
	
\addplot+ [color = blue!50!black,solid,boxplot prepared = {box extend=0.2, draw position= 1, lower whisker = 0.201895, lower quartile = 0.213186, median = 0.220929, upper quartile = 0.235154, upper whisker = 0.255019},]coordinates{}; 

\addplot+ [color = blue!50!black,solid,boxplot prepared = {box extend=0.2, draw position= 1.5, lower whisker =0.373011, lower quartile = 0.396586, median = 0.432716, upper quartile = 0.440368, upper whisker = 0.547433},]coordinates{}; 

\addplot+ [color = blue!50!black,solid,boxplot prepared = {box extend=0.2, draw position= 2, lower whisker =0.559647, lower quartile = 0.568128, median = 0.586087, upper quartile = 0.593227, upper whisker = 0.626096},]coordinates{};

\addplot+ [color = blue!50!black,solid,boxplot prepared = {box extend=0.2, draw position= 2.5, lower whisker =0.625038, lower quartile = 0.638440, median = 0.645200, upper quartile = 0.653596, upper whisker = 0.663893},]coordinates{};

\addplot+ [color = blue!50!black,solid,boxplot prepared = {box extend=0.2, draw position= 3, lower whisker =0.658702, lower quartile = 0.675279, median = 0.681046, upper quartile = 0.689108, upper whisker = 0.702182},]coordinates{};

\addplot+ [color = blue!50!black,solid,boxplot prepared = {box extend=0.2, draw position= 3.5, lower whisker =0.732935, lower quartile = 0.753958, median = 0.761218, upper quartile = 0.764726, upper whisker = 0.772473},]coordinates{};

\addplot+ [color = blue!50!black,solid,boxplot prepared = {box extend=0.2, draw position= 4, lower whisker =0.716947, lower quartile = 0.756336, median = 0.759216, upper quartile = 0.763776, upper whisker = 0.772799},]coordinates{};

\addplot+ [color = lime!70!black,solid,boxplot prepared = {box extend=0.2, draw position= 1, lower whisker = 0.184032, lower quartile = 0.192299, median = 0.199339, upper quartile = 0.201977, upper whisker = 0.212370},]coordinates{}; 

\addplot+ [color = lime!70!black,solid,boxplot prepared = {box extend=0.2, draw position= 1.5, lower whisker =0.341295, lower quartile = 0.360498, median = 0.368204, upper quartile = 0.375317, upper whisker = 0.393142},]coordinates{}; 

\addplot+ [color = lime!70!black,solid,boxplot prepared = {box extend=0.2, draw position= 2, lower whisker =0.550264, lower quartile = 0.567989, median = 0.578451, upper quartile = 0.594834, upper whisker = 0.606655},]coordinates{};

\addplot+ [color = lime!70!black,solid,boxplot prepared = {box extend=0.2, draw position= 2.5, lower whisker =0.655603, lower quartile = 0.668577, median = 0.675514, upper quartile = 0.681368, upper whisker = 0.701735},]coordinates{};

\addplot+ [color = lime!70!black,solid,boxplot prepared = {box extend=0.2, draw position= 3, lower whisker =0.708975, lower quartile = 0.719598, median = 0.728825, upper quartile = 0.733889, upper whisker = 0.747373},]coordinates{};

\addplot+ [color = lime!70!black,solid,boxplot prepared = {box extend=0.2, draw position= 3.5, lower whisker =0.748643, lower quartile = 0.753958, median = 0.761218, upper quartile = 0.764726, upper whisker = 0.772473},]coordinates{};

\addplot+ [color = lime!70!black,solid,boxplot prepared = {box extend=0.2, draw position= 4, lower whisker =0.748614, lower quartile = 0.756336, median = 0.759216, upper quartile = 0.763776, upper whisker = 0.772799},]coordinates{};

\end{axis}
\end{tikzpicture}
}
\end{minipage}
\caption{Average clustering coefficient of \PS networks obtained by the best move dynamic for $n=1000$ over 20 runs with $\sigma(x) = 18 x^\alpha$. Blue: results of the process starting from a cycle; green: starting from a random tree.}
\label{plot:CC}
\end{figure}

Figure~\ref{plot:CC} shows the box-and-whiskers plot for the average clustering coefficient of the pairwise stable networks obtained by the algorithm for $n=1000$ with respect to the value of the power coefficient $\alpha$. 
The upper and lower whiskers show the maximal and the minimal average clustering coefficient over 20 runs. 
The bottom and top of the boxes are the first and the third quartiles; the middle lines are the median values. 
The plot explicitly shows that pairwise networks generated by the best move dynamic for a polynomial edge-cost function have a high clustering coefficient. 
The results indicate that the clustering coefficient correlates with the power coefficient $\alpha$.

\begin{figure}[!ht]
\begin{minipage}{0.24\textwidth}
\resizebox {\textwidth} {!} {
\begin{tikzpicture}
\begin{axis}[
xmode=log,
ymode=log,
		legend style={
		at={(0.55,0.75)},
		anchor=west},
		xlabel= {degree},
		ylabel= {local clustering coefficient},
		baseline
	]

-\addplot+ [color = blue!50!black,only marks, mark=*] coordinates{  (3,9) (4,19) (5,16) (6,26) (7,37) (8,48) (9,80) (10,85) (11,111) (12,155) (13,181) (14,159) (15,183) (16,161) (17,180) (18,152) (19,121) (20,114) (21,86) (22,65) (23,70) (24,58) (25,37) (26,30) (27,38) (28,31) (29,25) (30,26) (31,20) (32,23) (33,11) (34,16) (35,12) (36,18) (37,15) (38,15) (39,19) (40,11) (41,13) (42,14) (43,17) (44,16) (45,17) (46,15) (47,12) (48,14) (49,7) (50,10) (51,10) (52,6) (53,10) (54,11) (55,12) (56,10) (57,7) (58,15) (59,12) (60,8) (61,14) (62,19) (63,12) (64,5) (65,10) (66,9) (67,7) (68,6) (69,5) (70,11) (71,8) (72,4) (73,5) (74,10) (75,9) (76,5) (77,7) (78,3) (79,8) (80,8) (81,10) (82,5) (83,5) (84,2) (85,11) (86,5) (87,2) (88,6) (89,7) (90,5) (91,3) (92,8) (93,3) (94,4) (95,6) (96,4) (97,2) (98,3) (99,4) (100,1) (101,1) (102,3) (103,3) (104,3) (107,3) (108,1) (109,3) (110,2) (111,3) (112,4) (113,3) (114,1) (115,2) (116,2) (117,1) (118,1) (121,1) (122,1) (127,1) (395,1)}; 
\addlegendentry{cycle, $\alpha = 1$};
\addplot+ [color = lime!70!black,only marks, mark=+] coordinates{  (3,9) (4,22) (5,16) (6,10) (7,20) (8,25) (9,38) (10,67) (11,117) (12,133) (13,143) (14,143) (15,175) (16,166) (17,126) (18,160) (19,133) (20,111) (21,99) (22,93) (23,72) (24,57) (25,80) (26,52) (27,47) (28,36) (29,24) (30,37) (31,22) (32,21) (33,19) (34,17) (35,17) (36,15) (37,20) (38,23) (39,11) (40,12) (41,13) (42,19) (43,14) (44,25) (45,13) (46,12) (47,12) (48,13) (49,13) (50,9) (51,18) (52,14) (53,17) (54,16) (55,11) (56,16) (57,11) (58,13) (59,8) (60,11) (61,10) (62,8) (63,7) (64,12) (65,3) (66,11) (67,15) (68,7) (69,10) (70,10) (71,10) (72,16) (73,6) (74,14) (75,11) (76,9) (77,7) (78,5) (79,7) (80,9) (81,12) (82,7) (83,5) (84,4) (85,1) (86,6) (87,9) (88,9) (89,5) (90,4) (91,3) (92,5) (93,6) (94,4) (95,5) (96,3) (97,2) (98,1) (99,3) (100,3) (101,4) (102,3) (103,2) (104,2) (105,1) (106,2) (107,3) (108,4) (109,5) (111,2) (112,1) (113,3) (114,4) (116,2) (117,2) (118,2) (119,1) (120,1) (121,1) (122,1) (124,1) (125,1) (128,1) (146,1)}; 
\addlegendentry{tree, $\alpha = 1$};
\addplot[black, domain = 4:100.5]{20000*x^(-2.1)};
\addlegendentry{ $y \sim k^{-2.1}$};

\end{axis}
\end{tikzpicture}
}
\end{minipage}
\hfill
\begin{minipage}{0.24\textwidth}
\resizebox {\textwidth} {!} {
\begin{tikzpicture}
\begin{axis}[
xmode=log,
ymode=log,
		legend style={
		at={(0.55,0.75)},
		anchor=west},
		xlabel= {degree},
		ylabel= {number of nodes},
		baseline
	]

\addplot+ [color = blue!50!black,only marks, mark=*] coordinates{  (1,1) (2,738) (3,223) (5,429) (6,201) (7,176) (8,104) (9,90) (10,63) (11,47) (12,43) (13,38) (14,31) (15,37) (16,21) (17,24) (18,29) (19,25) (20,36) (21,28) (22,31) (23,23) (24,14) (25,26) (26,32) (27,29) (28,26) (29,28) (30,23) (31,30) (32,21) (33,23) (34,23) (35,26) (36,23) (37,17) (38,17) (39,20) (40,23) (41,18) (42,16) (43,13) (44,17) (45,18) (46,14) (47,14) (48,6) (49,7) (50,4) (51,6) (52,9) (53,5) (54,3) (55,2) (56,2) (59,3) (60,2) (73,1) (89,1)}; 
\addlegendentry{cycle, $\alpha = 2$};
\addplot+ [color = lime!70!black,only marks, mark=+] coordinates{  (1,1) (2,697) (4,242) (5,399) (6,189) (7,165) (8,85) (9,65) (10,74) (11,61) (12,43) (13,41) (14,37) (15,42) (16,36) (17,37) (18,26) (19,30) (20,30) (21,34) (22,20) (23,20) (24,39) (25,30) (26,35) (27,22) (28,32) (29,20) (30,28) (31,26) (32,30) (33,19) (34,17) (35,26) (36,28) (37,35) (38,24) (39,26) (40,21) (41,14) (42,18) (43,14) (44,17) (45,11) (46,12) (47,10) (48,15) (49,4) (50,15) (51,4) (52,4) (53,9) (54,2) (55,5) (56,2) (57,2) (58,1) (59,2) (60,1) (61,1) (62,1) (64,1) (65,2) (73,1)}; 
\addlegendentry{tree, $\alpha = 2$};

\addplot[black, domain = 3:50.5]{10000*x^(-2.3)};
\addlegendentry{ $y \sim k^{-2.3}$};

\end{axis}
\end{tikzpicture}
}
\end{minipage}
\hfill
\begin{minipage}{0.24\textwidth}
\resizebox {\textwidth} {!} {
\begin{tikzpicture}
\begin{axis}[
xmode=log,
ymode=log,
		legend style={
		at={(0.55,0.75)},
		anchor=west},
		xlabel= {degree},
		ylabel= {number of nodes},
		baseline
	]

\addplot+ [color = blue!50!black,only marks, mark=*] coordinates{  (1,1) (2,1362) (3,220) (4,696) (5,111) (6,246) (8,60) (9,77) (10,29) (11,38) (12,22) (13,19) (14,8) (15,10) (16,12) (17,6) (18,4) (19,10) (20,5) (21,6) (22,6) (23,4) (24,4) (25,8) (26,2) (27,2) (28,5) (29,1) (30,4) (31,2) (32,2) (33,3) (34,4) (36,1) (37,1) (40,2) (42,1) (45,1) (52,1) (53,1) (56,1) (57,1) (60,1)}; 
\addlegendentry{cycle, $\alpha = 3$};
\addplot+ [color = lime!70!black,only marks, mark=+] coordinates{  (1,1) (2,1561) (3,183) (4,682) (6,70) (7,209) (8,27) (9,75) (10,20) (11,39) (12,9) (13,29) (14,6) (15,10) (16,9) (17,12) (18,6) (19,4) (20,1) (21,8) (22,1) (23,3) (24,4) (25,3) (26,1) (27,2) (28,3) (30,1) (32,1) (33,1) (34,2) (37,2) (38,1) (39,1) (40,1) (41,1) (43,1) (45,1) (47,1) (48,1) (49,1) (51,1) (52,3) (54,1) (55,1)}; 
\addlegendentry{tree, $\alpha = 3$};

\addplot[black, domain = 2:50.5]{10000*x^(-2.5)};
\addlegendentry{ $y \sim k^{-2.5}$};
\end{axis}
\end{tikzpicture}
}
\end{minipage}
\hfill
\begin{minipage}{0.24\textwidth}
\resizebox {\textwidth} {!} {
\begin{tikzpicture}
\begin{axis}[
xmode=log,
ymode=log,
		legend style={
		at={(0.55,0.75)},
		anchor=west},
		xlabel= {degree},
		ylabel= {number of nodes},
		baseline
	]
\addplot+ [color = blue!50!black,only marks, mark=*] coordinates{  (1,1) (2,1792)  (4,723) (5,5) (6,269) (8,89) (10,44) (12,22) (13,2) (14,6) (16,6) (17,1) (18,4) (20,4) (22,1) (23,1) (24,2) (25,1) (28,2) (32,1) (35,1) (37,1) (39,1) (42,1)}; 
\addlegendentry{cycle, $\alpha = 4$};

\addplot+ [color = lime!70!black,only marks, mark=+] coordinates{  (2,1931) (4,704) (5,1) (6,207) (8,68) (10,29) (12,22) (14,11) (16,8) (17,1) (18,5) (20,3) (21,1) (22,1) (31,1) (35,1) (39,1) (44,1) (55,1)}; 
\addlegendentry{tree, $\alpha = 4$};

\addplot[black, domain = 2.1:40.5]{12000*x^(-2.7)};
\addlegendentry{ $y \sim k^{-2.5}$};

\end{axis}
\end{tikzpicture}
}
\end{minipage}
\caption{Log-log plot of the degree distribution  of \PS networks obtained by the best move dynamic for $n=3000$ with $\sigma(x) = 18 x^\alpha$. Blue: results for the process starting from a cycle; green: starting from a random tree. Black: a fitted perfect power law distribution.}
\label{plot:deg_distr}
\end{figure}
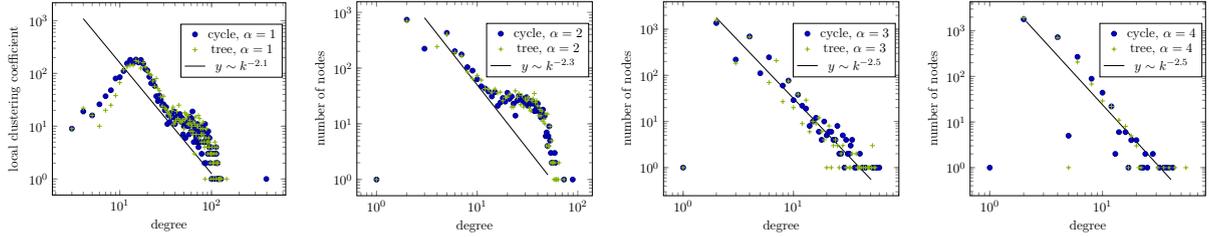

Figure~\ref{plot:deg_distr} shows a degree distribution for the resulting \PS networks for $n=3000$.  
We supplemented each plot with a plot of a perfect power-law distribution $P(k)\sim k^{-\gamma}$. 
All our experiments show that the power-law exponent $\gamma$ is between 2 and 3, which indicates that our generated \PS networks are indeed scale-free.

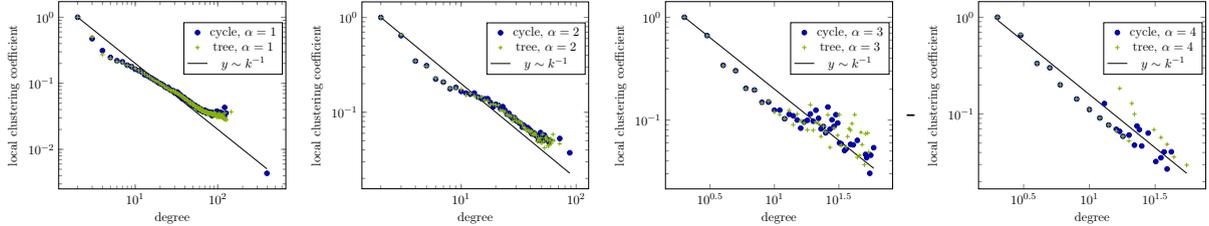
\begin{figure}[!ht]
\begin{minipage}{0.24\textwidth}
\resizebox {\textwidth} {!} {
\begin{tikzpicture}
\begin{axis}[
xmode=log,
ymode=log,
		legend style={
		at={(0.55,0.75)},
		anchor=west},
		xlabel= {degree},
		ylabel= {local clustering coefficient},
		baseline
	]
	\addplot+ [color = blue!50!black,only marks, mark=*] coordinates{  (2,1.000000) (3,0.473684) (4,0.312500) (5,0.246154) (6,0.219820) (7,0.211309) (8,0.188393) (9,0.179085) (10,0.163363) (11,0.156364) (12,0.147916) (13,0.134736) (14,0.130367) (15,0.123396) (16,0.118241) (17,0.112326) (18,0.108086) (19,0.102390) (20,0.099327) (21,0.096850) (22,0.092703) (23,0.088047) (24,0.085586) (25,0.087111) (26,0.081296) (27,0.078669) (28,0.077989) (29,0.076923) (30,0.071609) (31,0.072464) (32,0.074047) (33,0.071970) (34,0.065508) (35,0.067040) (36,0.064656) (37,0.058759) (38,0.060717) (39,0.061219) (40,0.058383) (41,0.056185) (42,0.055817) (43,0.055302) (44,0.053414) (45,0.053737) (46,0.048390) (47,0.051011) (48,0.047112) (49,0.048639) (50,0.047837) (51,0.045752) (52,0.045701) (53,0.046312) (54,0.043967) (55,0.044444) (56,0.042857) (57,0.042356) (58,0.043305) (59,0.039962) (60,0.039508) (61,0.041012) (62,0.040323) (63,0.038812) (64,0.039683) (65,0.039263) (66,0.037363) (67,0.037389) (68,0.037665) (69,0.037976) (70,0.039286) (71,0.037726) (72,0.036228) (73,0.035388) (74,0.037476) (75,0.036685) (76,0.035739) (77,0.036341) (78,0.036131) (79,0.036108) (80,0.036709) (81,0.036543) (82,0.034267) (83,0.035851) (84,0.034945) (85,0.032493) (86,0.034063) (87,0.035730) (88,0.035602) (89,0.034270) (90,0.034457) (91,0.033364) (92,0.035276) (93,0.034712) (94,0.035194) (95,0.034043) (96,0.034539) (97,0.037872) (98,0.034662) (99,0.036281) (100,0.035556) (101,0.034125) (102,0.036821) (103,0.034203) (106,0.035340) (107,0.032446) (108,0.036402) (109,0.034064) (110,0.033917) (111,0.034070) (112,0.035071) (113,0.034766) (114,0.032060) (115,0.032265) (116,0.033133) (117,0.031536) (120,0.031232) (121,0.043113) (126,0.035429) (394,0.004340)}; 
\addlegendentry{cycle, $\alpha = 1$};

\addplot+ [color = lime!70!black,only marks, mark=+] coordinates{  (2,1.000000) (3,0.500000) (4,0.270833) (5,0.240000) (6,0.213333) (7,0.205714) (8,0.187030) (9,0.180763) (10,0.167522) (11,0.156528) (12,0.146006) (13,0.138784) (14,0.129796) (15,0.124900) (16,0.117328) (17,0.111075) (18,0.106737) (19,0.105263) (20,0.098724) (21,0.095033) (22,0.092893) (23,0.086124) (24,0.087636) (25,0.081923) (26,0.080982) (27,0.077161) (28,0.077601) (29,0.077619) (30,0.072309) (31,0.073835) (32,0.070352) (33,0.069742) (34,0.067317) (35,0.066779) (36,0.064127) (37,0.063324) (38,0.062330) (39,0.058704) (40,0.057791) (41,0.057510) (42,0.055915) (43,0.055149) (44,0.053098) (45,0.052525) (46,0.049275) (47,0.051946) (48,0.048554) (49,0.047902) (50,0.048163) (51,0.046667) (52,0.047201) (53,0.043995) (54,0.044470) (55,0.043056) (56,0.042857) (57,0.043811) (58,0.041591) (59,0.040115) (60,0.042373) (61,0.040505) (62,0.041928) (63,0.038701) (64,0.037698) (65,0.038374) (66,0.039005) (67,0.037023) (68,0.037401) (69,0.037383) (70,0.036439) (71,0.036997) (72,0.036385) (73,0.037671) (74,0.035475) (75,0.035916) (76,0.034787) (77,0.033971) (78,0.035583) (79,0.035017) (80,0.035311) (81,0.034524) (82,0.034568) (83,0.035777) (84,0.034997) (85,0.033613) (86,0.033926) (87,0.033592) (88,0.034378) (89,0.032623) (90,0.035289) (91,0.035897) (92,0.034281) (93,0.033602) (94,0.033585) (95,0.033819) (96,0.034210) (97,0.030713) (98,0.031980) (99,0.032983) (100,0.032323) (101,0.033465) (102,0.030868) (103,0.034457) (104,0.030807) (105,0.031319) (106,0.032165) (107,0.034033) (108,0.031533) (110,0.031359) (111,0.033251) (112,0.030459) (113,0.032356) (115,0.032265) (116,0.029610) (117,0.029841) (118,0.030711) (119,0.029768) (120,0.028571) (121,0.035399) (123,0.028655) (124,0.028849) (127,0.027747) (145,0.036973)}; 
\addlegendentry{tree, $\alpha = 1$};

\addplot[black, domain = 2:394]{2*x^(-1)};
\addlegendentry{ $y \sim k^{-1}$};
\end{axis}
\end{tikzpicture}
}
\end{minipage}
\hfill
\begin{minipage}{0.24\textwidth}
\resizebox {\textwidth} {!} {
\begin{tikzpicture}
\begin{axis}[
xmode=log,
ymode=log,
		legend style={
		at={(0.55,0.75)},
		anchor=west},
		xlabel= {degree},
		ylabel= {local clustering coefficient},
		baseline
	]

\addplot+ [color = blue!50!black,only marks, mark=*] coordinates{  (1,0.000000) (2,1.000000) (3,0.642751) (4,0.346542) (5,0.309950) (6,0.223864) (7,0.207875) (8,0.176984) (9,0.182540) (10,0.165957) (11,0.158985) (12,0.155104) (13,0.159223) (14,0.144639) (15,0.139229) (16,0.138542) (17,0.138945) (18,0.126536) (19,0.113873) (20,0.121428) (21,0.118433) (22,0.108601) (23,0.114625) (24,0.104933) (25,0.095313) (26,0.097400) (27,0.090182) (28,0.086451) (29,0.087599) (30,0.081839) (31,0.080082) (32,0.076964) (33,0.076169) (34,0.069519) (35,0.072853) (36,0.069935) (37,0.067479) (38,0.068777) (39,0.064308) (40,0.064672) (41,0.059451) (42,0.058787) (43,0.060713) (44,0.058786) (45,0.056205) (46,0.056660) (47,0.050725) (48,0.053571) (49,0.061012) (50,0.049660) (51,0.053421) (52,0.050226) (53,0.057329) (54,0.052061) (55,0.052862) (58,0.048195) (59,0.050555) (72,0.052817) (88,0.037356)}; 
\addlegendentry{cycle, $\alpha = 2$};
\addplot+ [color = lime!70!black,only marks, mark=+] coordinates{  (1,0.000000) (2,1.000000) (3,0.655648) (4,0.346700) (5,0.305820) (6,0.227879) (7,0.208963) (8,0.179670) (9,0.178679) (10,0.174863) (11,0.151374) (12,0.158537) (13,0.148995) (14,0.142857) (15,0.145238) (16,0.134009) (17,0.131505) (18,0.120697) (19,0.122222) (20,0.118576) (21,0.119048) (22,0.104762) (23,0.103476) (24,0.101691) (25,0.094381) (26,0.092587) (27,0.090812) (28,0.080820) (29,0.088054) (30,0.082228) (31,0.078853) (32,0.079266) (33,0.078097) (34,0.074524) (35,0.069148) (36,0.067982) (37,0.064064) (38,0.064449) (39,0.063685) (40,0.061905) (41,0.062669) (42,0.057657) (43,0.058693) (44,0.060061) (45,0.056734) (46,0.049372) (47,0.053161) (48,0.050310) (49,0.053458) (50,0.047347) (51,0.058431) (52,0.049187) (53,0.042090) (54,0.053249) (55,0.054209) (56,0.050974) (57,0.043860) (58,0.045070) (59,0.049679) (60,0.052542) (61,0.046995) (63,0.058372) (64,0.050347) (72,0.046557)}; 
\addlegendentry{tree, $\alpha = 2$};

\addplot[black, domain = 2:88]{2*x^(-1)};
\addlegendentry{ $y \sim k^{-1}$};

\end{axis}
\end{tikzpicture}
}
\end{minipage}
\hfill
\begin{minipage}{0.24\textwidth}
\resizebox {\textwidth} {!} {
\begin{tikzpicture}
\begin{axis}[
xmode=log,
ymode=log,
		legend style={
		at={(0.55,0.75)},
		anchor=west},
		xlabel= {degree},
		ylabel= {local clustering coefficient},
		baseline
	]

\addplot+ [color = blue!50!black,only marks, mark=*] coordinates{  (1,0.000000) (2,1.000000) (3,0.663637) (4,0.340517) (5,0.301802) (6,0.203523) (7,0.196032) (8,0.148887) (9,0.149425) (10,0.124561) (11,0.124794) (12,0.102871) (13,0.113782) (14,0.109890) (15,0.100000) (16,0.083333) (17,0.095588) (18,0.099346) (19,0.114620) (20,0.097368) (21,0.124603) (22,0.103896) (23,0.082016) (24,0.086504) (25,0.075000) (26,0.133846) (27,0.081482) (28,0.100529) (29,0.086823) (30,0.112644) (31,0.093548) (32,0.059812) (33,0.058712) (35,0.050420) (36,0.052381) (39,0.058030) (41,0.057317) (44,0.063425) (51,0.046275) (52,0.042986) (55,0.030303) (56,0.045455) (59,0.053770)}; 
\addlegendentry{cycle, $\alpha = 3$};
\addplot+ [color = lime!70!black,only marks, mark=+] coordinates{  (1,0.000000) (2,1.000000) (3,0.661203) (4,0.337976) (5,0.300000) (6,0.201595) (7,0.195767) (8,0.146190) (9,0.152778) (10,0.119658) (11,0.113131) (12,0.102926) (13,0.123932) (14,0.090110) (15,0.114286) (16,0.098611) (17,0.093137) (18,0.080065) (19,0.140351) (20,0.079605) (21,0.095238) (22,0.092352) (23,0.069170) (24,0.086956) (25,0.076667) (26,0.112308) (27,0.054131) (29,0.083744) (31,0.070968) (32,0.112903) (33,0.056818) (36,0.054762) (37,0.069069) (38,0.082504) (39,0.079622) (40,0.092308) (42,0.112660) (44,0.043340) (46,0.045411) (47,0.075856) (48,0.057624) (50,0.036735) (51,0.073464) (53,0.074746) (54,0.048917)}; 
\addlegendentry{tree, $\alpha = 3$};

\addplot[black, domain = 2:59]{2*x^(-1)};
\addlegendentry{ $y \sim k^{-1}$};
\end{axis}
\end{tikzpicture}
}
\end{minipage}
-\hfill
\begin{minipage}{0.24\textwidth}
\resizebox {\textwidth} {!} {
\begin{tikzpicture}
\begin{axis}[
xmode=log,
ymode=log,
		legend style={
		at={(0.55,0.75)},
		anchor=west},
		xlabel= {degree},
		ylabel= {local clustering coefficient},
		baseline
	]
\addplot+ [color = blue!50!black,only marks, mark=*] coordinates{  (1,0.000000) (2,1.000000) (3,0.650000) (4,0.333333) (5,0.300000) (6,0.200000) (8,0.142857) (10,0.111111) (12,0.090909) (13,0.128205) (14,0.076923) (16,0.069444) (17,0.066176) (18,0.058824) (20,0.060526) (22,0.047619) (23,0.075099) (24,0.068841) (25,0.046667) (28,0.063492) (32,0.032258) (35,0.035294) (37,0.040541) (39,0.026991) (42,0.040650)}; 
\addlegendentry{cycle, $\alpha = 4$};

\addplot+ [color = lime!70!black,only marks, mark=+] coordinates{  (2,1.000000) (3,0.666667) (4,0.333333) (5,0.300000) (6,0.200000) (8,0.142857) (10,0.111111) (12,0.090909) (14,0.076923) (16,0.066667) (17,0.183824) (18,0.058824) (20,0.052632) (21,0.128571) (22,0.099567) (31,0.068817) (35,0.057143) (39,0.048583) (44,0.035941) (55,0.029630)}; 
\addlegendentry{tree, $\alpha = 4$};

\addplot[black, domain = 2:55]{2*x^(-1.1)};
\addlegendentry{ $y \sim k^{-1}$};

\end{axis}
\end{tikzpicture}
}
\end{minipage}
\caption{Log-log plot of the local clustering coefficient of nodes of a given degree  in  \PS networks obtained by the best move dynamic for $n=3000$ where $\sigma(x) = 18 x^\alpha$. Blue: results starting from a cycle; green: starting from a random tree. Black line: the function $2/k$.}
\label{plot:avg_CC}
\end{figure} 

 Figure~\ref{plot:avg_CC}\ illustrates the correlation between the node degree and the local clustering coefficient of nodes with the respective degree. 
All plots show that the local clustering coefficient is an inverse function of a node degree. 
In Network Science, a local clustering following the law $\sim k^{-1}$ is considered as an indication of the network's hierarchy that is a fundamental property of many real-world networks\cite{ravasz2003hierarchical}.

\begin{table*}[h!]%
\begin{center}
\renewcommand{\arraystretch}{1.5}
\begin{tabular}{@{}lcccccc@{}}
  & \textbf{SNCG, } & \textbf{SNCG, }   &\textbf{ego} &\textbf{ADVOGATO} &\textbf{HAMSTERSTER} \\
  &$\alpha=2$ &$\alpha=3$   &\textbf{Facebook}~\cite{leskovec2012learning} &\cite{nr-aaai15} &\cite{nr-aaai15}  \\
 \hline
 $|V|$  & 3000 & 3000   &4039 &2280 &1348\\
 $|E|$  &18059 &6019   &88234 &5251 &6642\\
Diameter &8 & 11    &8 &11 &6\\
avg distance &3.69 & 5.17  &3.69 &3.85 &3.2\\
max degree &72 &55   &1045 &148 &273\\
avg degree &12 & 4.013    &43.7 &4.61 &9.85\\
avg CC &0.415 & 0.67   &0.617 &0.2868 &0.54\\
\hline
\end{tabular}
\end{center}
\caption{Comparison of basic structural properties of \PS networks of the SNCG and real-world social networks. The networks \textbf{ego-Facebook}, \textbf{ADVOGATO}, and \textbf{HAMSTERSTER}  are (snippets of) online social networks .}
\label{table:compare_with_real}
\end{table*}
\noindent Table~\ref{table:compare_with_real} shows a comparison of a experimentally generated network with $3000$ nodes for $\alpha = 2$ and $\alpha = 3$, and real-world social networks.

In summary, we conclude from our proof-of-concept experiments that the best move dynamic of the SNCG generates \PS networks that have very similar properties as real-world social networks.

\section{Conclusion}
We introduced the SNCG, a promising game-theoretical model of strategic network formation where agents can bilaterally form new connections or unilaterally remove existing links aiming to maximize their centrality in the created network while at the same time to minimize the cost for maintaining edges. We emphasize that our model is based on only four simple principles: (1) agents are selfish, (2) each agent aims at increasing her centrality, (3) new connections are most likely to appear between friends of friends rather than between more remote nodes, and (4) connections are costly and can only be created by bilateral consent. All principles are motivated by modeling real-world social networks. 

For this simple and stylized model for the creation of a social network by selfish agents, we provide theoretical as well as promising empirical results.
On the theory side, we show that equilibrium networks of the SNCG have structural properties expected from social networks, like a high number of triangles, low diameter, and a low number of isolated 1-degree nodes. Our bounds on the PoA and the PoS show that the cost of closing a triangle essentially determines how inefficient an equilibrium network can be, compared to the social optimum. We emphasize, that all our theoretical results hold for a broad class of convex monotone edge cost functions. 
On the empicial side, we provide proof-of-concept results showing that the best move dynamic of the SNCG converges to equilibrium networks that share fundamental properties with real-world networks, like a power-law degree distribution, a high clustering, and a low diameter.

We see our paper as a promising step towards game-theoretic models that yield networks with all core properties of real-world networks. Future work could systematically study the influence of our model parameters on the obtained network features and on proving that the sequential network creation process indeed converges to real-world like networks with high probability.

\bibliographystyle{abbrv}
\bibliography{distNCG}

\end{document}